\documentclass[runningheads]{llncs}
\usepackage{lineno}
\usepackage[utf8]{inputenc}
\usepackage[T1]{fontenc}
\usepackage{tikz}
\usepackage{amsmath}
\usepackage{algorithm2e}
\usepackage{cite}           
\usepackage{lipsum}
\usepackage{tikz}
\usepackage{calc}
\usepackage{cite}           
\usepackage{enumitem}
\usepackage{float}
\usepackage{amsmath, amssymb}
\usepackage{amsfonts}
\usepackage{wrapfig}
\usepackage{multicol}
\usepackage{listings}
\usepackage{multirow}
\usepackage{varwidth}
\usepackage[section]{placeins}
\usepackage{hyperref}
\usepackage{cleveref}
\usepackage{subcaption}
\usepackage{tablefootnote}
\usepackage{todonotes}
\usepackage{thmtools}
\usepackage{thm-restate}

\usepackage{graphicx}
\usepackage[noend]{algpseudocode}

\newcommand{\pre}[1]{{^{\bullet}#1}}
\newcommand{\post}[1]{{{#1}^{\bullet}}}

\usepackage{orcidlink}

\usepackage{mathtools}

\algnewcommand\LeftComment[2]{%
\hspace{#1\algindent}$\triangleright$ \eqparbox{}{#2} \hfill %
}
\title{ \bf Constructing Weakly Terminating \\ Interface Protocols}
\author{
Debjyoti Bera\inst{1}\orcidlink{0009-0006-5362-8995} \and Tim~A.C.~Willemse\inst{2}\orcidlink{0000-0003-3049-7962}
}
\institute{
TNO-ESI, Eindhoven, The Netherlands\\
\email{debjyoti.bera@tno.nl},\\ 
\and
TU Eindhoven, The Netherlands\\
\email{t.a.c.willemse@tue.nl}
}

\begin{document}

\maketitle
\thispagestyle{empty}
\pagestyle{plain}

\begin{abstract}
Interfaces play a central role in determining compatible component compositions by prescribing permissible interactions between a service provider (server) and its consumers (clients). The high degree of concurrency in asynchronous communicating systems increases the risk of unintentionally introducing deadlocks and livelocks. The weak termination property serves as a basic sanity check to avoid such problems. It assures that in each reachable state, the system has the option to eventually terminate. This paper generalizes existing results that, by construction, guarantee weakly terminating interface compositions. Our generalizations make the theory applicable more broadly in practice. Starting with an interface specification of a server satisfying certain properties, we show how a class of clients modeling different usage contexts can be derived using a partial mirroring relation. 
Furthermore, we discuss an embedding of our results in an open-source tool to guide modelers in designing weakly terminating interfaces. 
\keywords{Interfaces \and Asynchronous Systems \and Weak Termination}
\end{abstract}

\section{Introduction}
The first step in the design of a component-based system involves the decomposition of a given system design problem into functionally coherent units of execution called components \cite{mcilroy1968mass, szyperski2002component}. 
Starting from an abstract model, each component is further detailed by means of stepwise refinements. 
To accomplish a common goal, components may need to interact with each other. In modern distributed systems, these interactions are typically asynchronous via message passing. 
The set of permissible interactions between pairs of components defines an interface protocol (contract) having two sides: a server side providing the functionality and a client side consuming it. 
The use of state machine formalisms \cite{de2001interface} to specify such interactions is quite prevalent.  
When interfaces between components have been defined, the resulting system is obtained by composing shared client-server interfaces satisfying some pre-defined notion of compatibility. 

A direct consequence of a component-based approach is the high degree of complexity induced by concurrency and asynchronous communication making it difficult to avoid subtle yet critical design errors, resulting in deadlocks or livelocks. 
It turns out that even if a server side protocol is free of these issues, simply mirroring this protocol to obtain client side protocols to define their usage does not guarantee the absence of errors.
Therefore the use of formal modeling and analysis techniques becomes inevitable. Petri nets provide a natural way to model an asynchronous communicating system and are supported by a rich set of verification methods and tools \cite{amat2023behind}. The graphical representation of a Petri net structure provides an intuitive way of modeling systems and in particular, structural analysis techniques overcome the drawbacks of state space exploration techniques for verifying system behavior.

A basic requirement for any component-based system is that all components in the system should always have the option to reach a desired, final state, and that all messages that have been sent have been received. This property is called \emph{weak termination}. Weak termination does not imply that the system always terminates but that there is always an option to do so. As a consequence of this property the freedom of deadlocks and livelocks is guaranteed.

Many interesting results exist for subclasses of Petri nets such that weak termination is implied by their structure. In \cite{van2009compositional} a sufficient condition is presented to pairwise verify weak termination for tree-structured compositions.
For a subclass of compositions of pairs of components, called ATIS-nets, this condition
is implied by their structure \cite{van2010construction}. 
ATIS-nets are constructed from pairs of acyclic marked graphs and isomorphic state machines, and the simultaneous refinement of pairs of places with weakly terminating multi-workflow nets. 
In \cite{bera2011component}, an architectural framework based on the notion of components and ports (defining a communication protocol) is proposed. The sufficient conditions for port compatibility, as well as on the acyclic network structure of component compositions, induces the weak termination property on the resulting system (network of asynchronous communicating components).
In \cite{bera2012designing}, another architectural framework based on the notion of components and interaction patterns (a parameterized multi-workflow net defining a communication protocol) is proposed. 
The interaction patterns cover three commonly occurring ones in practice, namely remote procedure call, cancellation (goal-feedback-result) and publish-subscribe. 
In each communication refinement of the existing system, a new component is added to the system as part of an interaction pattern. 
In \cite{BeraThesis}, the architectural framework \cite{bera2012designing} is extended with the mirrored port pattern by lifting the notion of compatible ports from \cite{bera2011component} to interaction patterns.

\begin{figure}[t] 
    \centering
    \includegraphics[scale=0.57]{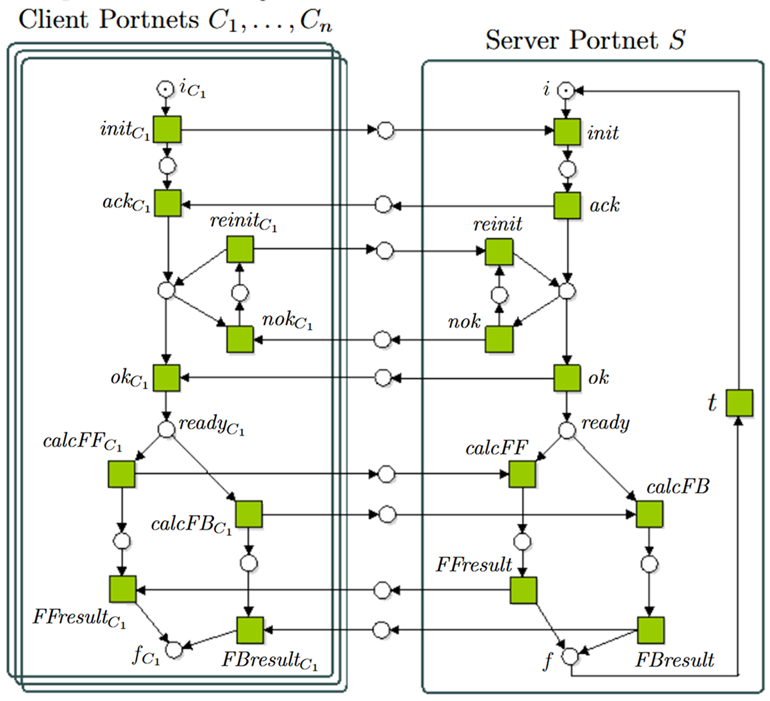}
    \caption{The Mirrored Portnet Interaction Pattern}
    \label{fig:mirroredPortnets}
\end{figure}

The mirrored port pattern defines a class of weakly terminating single server and multiple client interaction patterns, where each client is an isomorphic mirror of the server state machine satisfying structural properties that enforce communication conditions such as the \emph{choice} and \emph{leg} property. The former guarantees that only one of the parties make a choice whenever there is a conflict, while the latter guarantees that every path between a split and a join always requires the participation of a server and one of the clients. An example of a mirrored port pattern is shown in Fig.~\ref{fig:mirroredPortnets}.
It turns out that for practical applications, the mirrored port pattern is quite limiting for at least three reasons:
\begin{itemize}
\item[(i)] the choice property requires there are no races between a server and client, i.e., the initiative of resolving conflicting transitions is either with the server or one of its client, but not both. 
This restriction is quite limiting for practical applications where races are often intensional. For instance, in Fig.~\ref{fig:mirroredPortnets}, it is not possible to model a race between a client firing one of its transitions from the place $\textit{ready}$ and a server that may emit an \emph{error} from $\textit{ready}$. 
\item[(ii)] mirrored portnets do not allow multiple transitions in a server/client to be connected to the same interface place. This means that it is not possible to send/receive the same message from more than one state. For instance, in Fig.~\ref{fig:mirroredPortnets}, the transition $\textit{reinit}$ had to be introduced since, transition \textit{init} cannot occur twice and share the same interface place. 
\item[(iii)] the isomorphism requirement implies that client portnets must fully mirror the server portnet even though they may not use all functionality. For instance, in Fig.~\ref{fig:mirroredPortnets}, all clients $C_1,\ldots, C_n$ are required to implement both choices from the place $\textit{ready}$, even though some of the clients may only use one of them. 
This is a common requirement since each client may require a different usage of a server depending on the context of the component they occur.  
\end{itemize}

This paper revisits the mirrored port interaction pattern \cite{BeraThesis} and extends it with a labeling function, relaxes constraints on its structure and weakens the isomorphism requirement on portnets. We call this class of interaction patterns \emph{partial mirrored portnet} and show that the weak termination property holds for these patterns. 
The results provide a structured method to design and reason about the correctness of a large class of interfaces encountered in practice. Since partial mirrored portnets are an interaction pattern of the component-based framework, we may reuse the stepwise refinements based construction method to design a weakly terminating network of asynchronous communicating components starting from an architectural diagram. 
A first application of these results has been realized in the open-source ComMA framework \cite{kurtev2024model}, widely used by engineers in industry to specify software interfaces. Algorithms have been implemented to automatically detect violations of portnet properties in interface specifications and provide feedback about termination problems at design-time.  

This paper is structured as follows: Sec.~2 presents basic mathematical concepts and notations. Sec.~3 presents labeled portnets, compositions and the partial mirror relation. Sec.~4 presents restrictions for a labeled portnet to be well-formed. Sec.~5 proves weak termination in a composition of a well-formed labeled portnet and its partial mirror. Sec.~6 extends the results to multiple clients. Sec.~7 shows an application of these results in ComMA. The paper concludes in Sec.~8. 

\section{Preliminaries}
Let $S$ be a set.
A \emph{bag} over $S$ is a function $m: S \rightarrow \mathbb{N}$, where $\mathbb{N} = \{0,1,2,\ldots\}$ denotes the set of natural numbers. 
For $s \in S$, $m(s)$ denotes the number of occurrences of $s$ in $m$. 
The set of all bags over $S$ is denoted by $B(S)$. 
We use $+$ and $-$ to denote the standard addition and difference of two bags.
The relations $=, <,$ and $\leq$ compare bags element-wise.
We treat sets as bags in which each element of the set occurs exactly once in the bag. 

The set of finite sequences over $S$ is denoted by $S^*$, with $\epsilon$ denoting the empty sequence and $\langle a_0, \dots, a_n \rangle$ denoting a sequence of length $n+1$.
We use $\circ$ to concatenate two sequences and we denote the length of a sequence $\sigma$ by $|\sigma|$.
A prefix of a sequence $\sigma$ is a sequence $\sigma'$ such that $\sigma = \sigma' \circ \sigma''$ for some sequence $\sigma''$.
We write $x \in \sigma$ when there are sequences $\sigma', \sigma''$ such that $\sigma = \sigma' \circ \langle x \rangle \circ \sigma''$.
The projection of a sequence $\sigma$ onto a set $Q$, denoted $\sigma_{|Q}$, is defined inductively as follows: $\epsilon_{|Q} = \epsilon$ and $(\langle a \rangle \circ \sigma')_{|Q} = \langle a \rangle \circ (\sigma'_{|Q})$ if $a \in Q$, and $\sigma'_{|Q|}$ otherwise.
We denote the set obtained by interleaving sequences $\sigma$ and $\sigma'$ by $\sigma || \sigma'$.

\newcommand{\mL}{\mathcal{L}}
A \emph{Petri net with labels}, or simply Petri net from hereon, is a mathematical model for describing the dynamics of complex systems.
\begin{definition}
A \emph{Petri net} is a tuple $N = (P, T, F, \mL, \mu)$, where $P$ is the set of \emph{places}; $T$ is a set of \emph{transitions} such that $P \cap T = \emptyset$, $F$ is the \emph{flow relation} $F \subseteq (P \times T) \cup (T \times P)$, $\mL$ is a set of \emph{labels} and $\mu : T \to \mL$ is a labeling function. 
\end{definition}
Let $N = (P,T,F, \mL, \mu)$ be an arbitrary but fixed Petri net.
Elements from $P \cup T$ are called \emph{nodes} and elements from $F$ are referred to as \emph{arcs}. 
For clarity, we occasionally denote the places of net $N$ by $P_N$, transitions by $T_N$ and similarly for other elements of the tuple. 
We define the \emph{preset} of a node $x$ as $\pre{x} = \{x' | (x',x) \in F\}$ and the \emph{postset} as $\post{x} = \{x' | (x,x') \in F\}$ and lift these notations to sets of nodes in the standard way.

The set of paths $\textit{PS}(N)$ of $N$ is the smallest set satisfying $\epsilon \in \textit{PS}(N)$, $\langle x \rangle \in \textit{PS}(N)$ for all $x \in P \cup T$, and for all $(x_0,x_1) \in F$ and finite paths $\langle x_1, \dots, x_n \rangle \in \textit{PS}(N)$, also $\langle x_0, x_1, \dots, x_n \rangle \in \textit{PS}(N)$.
Petri net $N$ is \emph{strongly connected} if for all distinct $x,x' \in P \cup T$ there is a path $\langle x, \dots, x' \rangle \in \textit{PS}(N)$. 

\begin{definition}
We say that $N$ is a \emph{state machine} (S-net) iff for all $t \in T$: $|\pre{t}| \leq 1$ and $|\post{t}| \leq 1$. 
It is a \emph{workflow net} (WFN) if there is exactly one place $i$ with $\pre{i} = \emptyset$, called the \emph{initial place}, one place $f$ with $\post{f} = \emptyset$, called the \emph{final place}, and all nodes $n \in P \cup T$ are on a path from $i$ to $f$.
A place $p$ is called a \emph{split} if $|\post{p}| > 1$ and it is called a \emph{join} if $|\pre{p}| > 1$. 
\end{definition}

Petri nets $N$ and $M$ are disjoint if $(P_N \cup T_N) \cap (P_M \cup T_M) = \emptyset$. 
They are \emph{isomorphic}, denoted by $N \cong M$ iff there is a bijection $\psi : P_N \cup T_N \rightarrow P_M \cup T_M$ such that for all $p \in P_N$ we have $\psi(p) \in P_M$ and for all $t \in T_N$ we have $\psi(t) \in T_M$ and $\mu_N(t) = \mu_M(\psi(t))$, and for all $(x,y) \in (P_N \cup T_N) \times (P_N \cup T_N)$ we have $(x,y) \in F_N$ iff $(\psi(x), \psi(y)) \in F_M$. 
\smallskip

The \emph{state} of Petri net $N = (P, T, F,\mL,\mu)$ is determined by its \emph{marking} which represents the distribution of tokens over places of the net. 
Formally, a marking of $N$ is a bag $m \in B(P)$, where $m(p)$ denotes the number of tokens in place $p \in P$. If $m(p) > 0$, place $p \in P$ is called \emph{marked} in marking $m$. 
A transition $t \in T$ is \emph{enabled} in $m$ iff $\pre{t} \leq m$, i.e., all the places in its preset are marked. 
Transition $t$ may \emph{fire} from marking $m$ if it is enabled, resulting in a marking $m'$ that satisfies $m'(p) = m(p) - \pre{t}(p) + \post{t}(p)$ for all $p \in P$. 
We write $m \xrightarrow{t} m'$ if $t$ is enabled in $m$ and firing $t$ results in $m'$, and lift this notion to sequences of transitions in the expected manner.

A \emph{net system} is a Petri net with an initial and final marking; its semantics is given by a \emph{labeled transition system}.
\begin{definition}
Let $N = (P,T,F,\mL,\mu)$ be a Petri net.
A net system is a 3-tuple $S = (N, m_0, m_f)$, where $m_0 \in B(P)$ is the initial marking and $m_f \in B(P)$ is the final marking. 
\end{definition}
Given a net system $S$, a marking $m'$ of $S$ is \emph{reachable} from $m$ if for some $\sigma \in T^*$ we have $m \xrightarrow{\sigma} m'$; we denote the set of markings reachable from $m$ by $\mathcal{R}(N,m)$.
A place $p \in P_N$ in a net system is called \emph{safe} if for all reachable markings $m \in \mathcal{R}(N,m_0)$ we have $m(p) \leq 1$. 
An important correctness criterion of a net system is whether it is always able to terminate: it implies the absence of (unknown) deadlocks in a system, but also the ability to direct the behavior to a terminal state.
\begin{definition}
A net system is \emph{weakly terminating} if from all markings $m \in \mathcal{R}(N,m_0)$ the final marking $m_f$ is reachable.
\end{definition}

\section{Labeled Portnets and their Mirrors}
Open Petri nets (OPN) are a well-studied \cite{milner2003bigraphs, sassone2005congruence, baldan2005compositional,
baldan2015modular} class of Petri nets equipped with open places and visible transitions, i.e., distinguished sets of input/output places and transitions over which a net may interact with its environment, either by exchanging tokens over open places, or by synchronizing on visible transitions. 
OPNs with open places \cite{wolf2009} are ideal for modeling asynchronous communicating systems as a composition of nets obtained by fusing their shared open places, representing interfaces. 
Portnets \cite{bera2011component} are a subclass of OPN that explicitly model the behavior of an interface by defining permissible interactions with its environment, i.e., the order in which tokens may be exchanged over open input and output places. 
Portnets are ideal to describe interface contracts in the context of component-based systems. 
An interface of a component providing a service describes expectations from its environment (other components) and obligations towards it by means of a portnet enacting the server side (sell side) of an interface contract. 
A component that consumes the service provided by another component acts as its environment and defines usage by means of a mirrored portnet enacting the client side (buy side) of an interface contract. 
The compatibility of a pair of portnets is defined by not only matching input and output labels of open places, but by additionally imposing restrictions on the net structure which guarantees the weak termination property in portnet compositions. 

This section introduces the class of \emph{labeled portnets} which extends portnets~\cite{bera2011component} with a labeling function and a composition operation over sets of labeled portnets.  A strict requirement for compatibility of a pair of portnets was for their skeletons to agree on isomorphism. 
This induces a mirroring relation between a server portnet and its client, which is quite restrictive in practice, since a client may only be interested in using a part of the server portnet. We relax the isomorphism requirement by the \emph{partial mirror} relation between portnet pairs.\smallskip

An OPN is a Petri net with a distinguished set of interface places, partitioned into input and output places, that represent the interface of the net. 
The interaction of an OPN with its environment results in tokens (representing messages) being exchanged over interface places. 
If a transition of an OPN is associated to an interface place, then it is either an input place or an output place, but not both. So a transition either represents a send, or a receive. The net obtained by discarding interface places is called its \emph{skeleton}. By imposing restrictions on the skeleton, many subclasses of OPN are obtained. 

\begin{definition}\label{def:OPN}
Let $P, I, O$ be pairwise disjoint sets.
An \emph{open Petri net} \emph{(OPN)} is a tuple $(P, I, O, T, F, \textit{init}, \textit{fin}, \mL, \mu)$, where
\begin{itemize}
\item $(P \cup I \cup O, T, F, \mL, \mu)$ is a Petri net,
\item $I$ is the set of input places; we require $\pre{I} = \emptyset$, 
\item $O$ is the set of output places; we require $\post{O} = \emptyset$,
\item for all $t \in T$ we have either $\pre{t} \cap I = \emptyset$ or $\post{t} \cap O = \emptyset$, 
\item $\textit{init} \subseteq P$ is the set of initial places and $\textit{fin} \subseteq P$ is the set of final places.
\end{itemize}
\end{definition}
We refer to the places in $I$ and $O$ as the interface places of the OPN.
The \emph{direction of communication} of a transition with respect to a place in an OPN $N$ is a function $\lambda:T \rightarrow \{\textit{send}, \textit{receive}, \tau\}$ defined as follows:
\[
\lambda(t) = \left \{ \begin{array}[c]{ll}
             \textit{send} & \text{if } \post{t} \cap O \neq \emptyset \\
             \textit{receive} & \text{if } \pre{t} \cap I \neq \emptyset \\
             \tau & \text{otherwise }
             \end{array} \right .
\]
That is, $\lambda(t) = \textit{send}$ if $t$ is connected to an output, whereas $\lambda(t) = \textit{receive}$ if $t$ is connected to an input; $\lambda(t) = \tau$ if neither is the case.
We say that $t$ is a \emph{communicating transition} iff $\lambda(t) \neq \tau$.
The \emph{skeleton} of $N$, denoted $\mathcal{S}(N)$, is an OPN $\mathcal{S}(N) = (P, \emptyset, \emptyset, T, F', \textit{init}, \textit{fin},\mL,\mu)$ with $F'= F \cap ((P \times T) \cup (T \times P))$.
If the skeleton is an S-net then we call it a state machine open Petri net (S-OPN). 
An OPN $N$ is called an open workflow net (OWN) if its skeleton is a \emph{WFN}. 
An S-OPN with a skeleton that is a \emph{WFN} is called a state machine open workflow net (S-OWN).
The \emph{closure} of a S-OWN $N$, denoted by $\textit{closure(N)}$, is an S-OPN obtained by extending $N$ with a fresh transition $t \notin T$ such that $\lambda(t) = \tau$, $\pre{t} = \textit{fin}$ and $\post{t} = \textit{init}$.
\begin{figure}[t]
    \centering
    \includegraphics[scale=0.55]{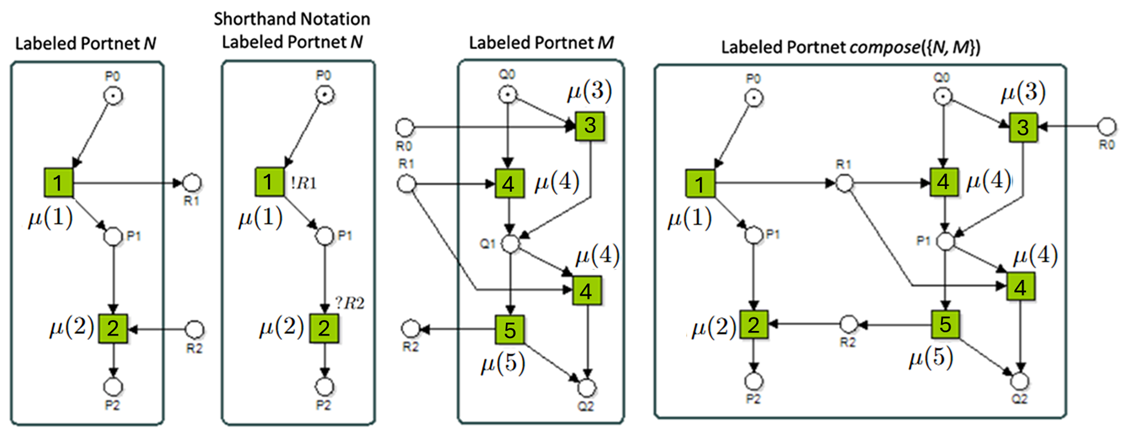}
    \caption{Labeled Portnets and their Composition}
\label{fig:labeledPortnetAndTheirComposition}
\end{figure}
A \emph{labeled portnet} is a subclass of OPN, characterized by additional restrictions on their structure. 
\begin{definition}
An OPN $N = (P, I, O, T, F, \textit{init}, \textit{fin}, \mL, \mu)$ is a \emph{labeled portnet} if:
\begin{itemize}
    \item $N$ is an S-OWN, and $|\textit{init}| = |\textit{fin}| = 1$, 
    \item all transitions $t \in T$ satisfy $|(\pre{t} \cup \post{t}) \cap (I \cup  O)| =1$, 
    \item for all $x \in I \cup O$ and all transitions $t, t' \in \pre{x} \cup \post{x}$ we have $\mu(t) = \mu(t')$, 
    \item for all transitions $t,t' \in T$ satisfying $\mu(t) = \mu(t')$ we have $(\pre{t} \cup \post{t}) \cap (I \cup O) = (\pre{t'} \cup \post{t'}) \cap (I \cup O)$, 
\end{itemize}
The net system $(N, m_0, m_f)$ of an OPN $N$ has as initial marking $m_0 = \textit{init}$ and final marking $m_f = \textit{fin}$, where sets $\textit{init}$ and $\textit{fin}$ are interpreted as bags. 
\end{definition}
\emph{Portnets} as defined in~\cite{bera2011component} are instances of labeled portnets where the labeling function $\mu$ satisfies $\mu(t) = t$.
That is, each transition in a portnet is labeled with its own, unique name.
To stress the distinction between portnets and labeled portnets, we sometimes refer to the former as \emph{ordinary portnets}.
\begin{lemma}
\label{lm:communication_direction}
Let $N$ be a labeled portnet and $t, t' \in T$ such that $\mu(t) = \mu(t')$. Then $\lambda(t) = \lambda(t')$.
\end{lemma}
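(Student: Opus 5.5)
The argument is a direct unfolding of the definitions of labeled portnet and of $\lambda$. Since $\mu(t) = \mu(t')$, the fourth condition of a labeled portnet gives $(\pre{t} \cup \post{t}) \cap (I \cup O) = (\pre{t'} \cup \post{t'}) \cap (I \cup O)$. By the second condition this set is a singleton $\{x\}$ with $x \in I \cup O$, so both $t$ and $t'$ are connected to the same interface place $x$ and to no other.

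We then distinguish two cases according to whether $x \in I$ or $x \in O$; these are exclusive because $P$, $I$, $O$ are pairwise disjoint.

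\emph{Case $x \in I$.} Definition~\ref{def:OPN} requires $\pre{I} = \emptyset$, so $x \notin \post{t}$, and hence $x \in \pre{t}$ and $\pre{t} \cap I \neq \emptyset$. Moreover $\post{t} \cap O = \emptyset$: any $y \in \post{t} \cap O$ would lie in $(\pre{t} \cup \post{t}) \cap (I \cup O) = \{x\}$, so $y = x \in I \cap O$, which is impossible. By the definition of $\lambda$, $\lambda(t) = \textit{receive}$. The same reasoning applies verbatim to $t'$, so $\lambda(t') = \textit{receive}$.

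\emph{Case $x \in O$.} Symmetrically, $\post{O} = \emptyset$ forces $x \in \post{t}$, so $\post{t} \cap O \neq \emptyset$ and $\lambda(t) = \textit{send}$. The same holds for $t'$.

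The only point needing care is that $\lambda$ is well defined, i.e.\ its first two clauses cannot both apply. This is ensured by the OPN condition that $\pre{t} \cap I = \emptyset$ or $\post{t} \cap O = \emptyset$, and, in the situation at hand, also by the singleton condition. Beyond this, no step presents a real obstacle.
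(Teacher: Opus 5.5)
Your proof is correct and follows essentially the same route as the paper: the second and fourth labeled-portnet conditions give a single interface place $x$ shared by $t$ and $t'$, and $\pre{I}=\emptyset$ (resp.\ $\post{O}=\emptyset$) then fixes on which side $x$ sits, and hence $\lambda$. Your explicit check that $\post{t}\cap O=\emptyset$ when $x\in I$, so that the \emph{send} clause of $\lambda$ cannot apply, is a step the paper leaves implicit.
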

\begin{proof}
Pick arbitrary $t,t' \in T$ satisfying $\mu(t) = \mu(t')$.
First observe that by the second requirement on labeled portnets, there is exactly one $x \in I \cup O$ such that $x \in \pre{t} \cup \post{t}$.
Suppose $x \in I$; the case $x \in O$ is fully dual.
Because by the second condition of an OPN, $\pre{I} = \emptyset$, we necessarily have $x \in \pre{t}$, so by definition $\lambda(t) = \textit{receive}$.
Since by the fourth requirement on labeled portnets, we find that $x \in \pre{t'} \cup \post{t'}$, either $x \in \pre{t'}$ or $x \in \post{t'}$. Suppose that $x \in \post{t'}$.
Then $t' \in \pre{I}$, contradicting condition $\pre{I} = \emptyset$ of an OPN. So $x \in \pre{t'}$, and hence $\lambda(t') = \textit{receive} = \lambda(t)$. 
\qed
\end{proof}

\noindent
Since the skeleton of a labeled portnet is a state machine workflow net, by Theorem~15~\cite{van2003soundness}, it is weakly terminating, as claimed by the corollary below. 
\begin{corollary}\label{cor:wt_skeleton}
The skeleton of a labeled portnet weakly terminates.
\end{corollary}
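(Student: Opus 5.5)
The plan is to argue directly from the structure of the skeleton, rather than only citing Theorem~15 of~\cite{van2003soundness}, so that the claim is self-contained. Let $N$ be a labeled portnet with skeleton $\mathcal{S}(N) = (P, \emptyset, \emptyset, T, F', \textit{init}, \textit{fin}, \mL, \mu)$. The net system is $(\mathcal{S}(N), m_0, m_f)$ with $m_0 = \textit{init} = \{i\}$ and $m_f = \textit{fin} = \{f\}$, where $i$ and $f$ are the initial and final places of the WFN skeleton.

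The first step is to establish a token invariant: every reachable marking holds exactly one token. Because $N$ is an S-OWN, its skeleton is an S-net, so every $t \in T$ has $|\pre{t}| \leq 1$ and $|\post{t}| \leq 1$ in $F'$. I would sharpen this to equalities. Each labeled portnet transition has exactly one interface place, by the second requirement, and the WFN property places every transition on a path from $i$ to $f$ in the skeleton. Hence $t$ has at least one input place and at least one output place in $P$, which together with the upper bounds gives $|\pre{t}| = |\post{t}| = 1$. Firing any transition therefore moves a single token from one place to another. By induction on the length of firing sequences, every $m \in \mathcal{R}(\mathcal{S}(N), m_0)$ equals $\{p\}$ for some $p \in P$.

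The second step shows that $f$ is reachable from any such marking. Take a reachable marking $\{p\}$. Since every node lies on a path from $i$ to $f$, there is a path $\langle p, t_1, p_1, t_2, \dots, t_k, f \rangle \in \textit{PS}(\mathcal{S}(N))$. Each $t_j$ has the single preceding place as its unique input, so it is enabled once the token sits there. Firing $t_1 \cdots t_k$ in order then carries the token along the path and yields the marking $\{f\} = m_f$.

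The only delicate point is the first step, namely excluding transitions with an empty preset or postset in the skeleton; such transitions would break the one-token invariant by creating or destroying tokens. The WFN path condition handles this. If it turned out to be subtle, for example if paths were read so that transitions might be endpoints, I would simply invoke Theorem~15 of~\cite{van2003soundness} instead. That theorem states that a state-machine WFN is sound and hence weakly terminating.
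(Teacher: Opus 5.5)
Your proof is correct, but it takes a different route from the paper. The paper gives no argument of its own: it notes that the skeleton is a state-machine workflow net and cites Theorem~15 of \cite{van2003soundness}. You reprove that fact directly.

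Your key step is this. $F'$ only relates places to transitions, and every transition lies strictly inside a path $\langle i,\dots,f\rangle$ whose endpoints are places. So each $t$ has a predecessor place and a successor place. This upgrades the paper's S-net condition $|\pre{t}|,|\post{t}|\le 1$ to equality, which yields the invariant that every reachable marking is a singleton $\{p\}$. The suffix from $p$ of a path through $p$ is then a firing sequence that carries the token to $f$.

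Two small remarks. The requirement that each transition has exactly one interface place plays no role in your argument. Your worry about transitions as path endpoints cannot arise, since both endpoints are places, so the fallback citation is unnecessary.

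What each approach buys: the paper's citation is shorter, but it leaves implicit that the standard notion of state machine (usually stated with $|\pre{t}|=|\post{t}|=1$) matches the paper's weaker definition. Your version closes that gap. Its by-product, the one-token invariant, is exactly what the paper later uses without proof, e.g.\ in Theorem~\ref{thm:proper_completion_portnets}, where it asserts that the internal places of $N$ and $M$ carry a single token.

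Like the paper, you read $\textit{init}$ and $\textit{fin}$ as the source and sink places of the workflow net. That reading is intended, and for $\textit{fin}$ it is genuinely needed: otherwise the dead marking $\{f\}$ would be reachable without being final.
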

An example labeled portnet is shown in Fig.~\ref{fig:labeledPortnetAndTheirComposition}. As illustrated, the shorthand notation $?$ (receive) and $!$ (send) will be used to keep figures readable and when necessary the name of the connected interface places will be prefixed to it. Next to labels on transitions, numbers are used as their identifiers. Depending on the context, we may omit one of them. 

Two OPNs are \emph{composable} if the only nodes they share are interface places, and if they share a non-empty set of interface places then either the outputs of one are a subset of the inputs of the other, or their inputs and outputs are disjoint.
The \emph{composition} of a set of pairwise composable OPNs is almost a pairwise union of their corresponding tuples except that the shared interface places between OPNs become the internal places of the composition. 
Fig.~\ref{fig:labeledPortnetAndTheirComposition} shows a composition of two composable labeled portnets $N$ and $M$.
\begin{definition}\label{def:composability}
Two OPNs $N$ and $M$ are \emph{composable} iff 
\begin{itemize}
    \item $(P_N \cup I_N \cup O_N \cup T_N) \cap (P_M \cup I_M \cup O_M \cup T_M) = ((I_N \cup O_N) \cap (I_M \cup O_M))$, 

    \item If $(I_N \cap O_M) \cup (I_M \cap O_N) \neq \emptyset$, then also:
         \begin{itemize}
               \item $O_N \subseteq I_M$ and $O_M \subseteq I_N$, and
               \item $I_N \cap I_M = O_N \cap O_M = \emptyset$.
         \end{itemize}

\end{itemize}
The composition of a non-empty set $S$ of pairwise composable OPNs is an OPN $\textit{compose}(S) = (P, I, O, T, F, \textit{init}, \textit{fin},\mL, \mu)$, where
\begin{itemize}
\item $P = (\bigcup_{N \in S} P_N) \cup ( (\bigcup_{N \in S} I_N) \cap (\bigcup_{N \in S} O_N))$,
\item $I = (\bigcup_{N \in S} I_N) \setminus (\bigcup_{N \in S} O_N)$, $O = (\bigcup_{N \in S} O_N) \setminus (\bigcup_{N \in S} I_N)$,
\item $T = (\bigcup_{N \in S} T_N)$, $F = (\bigcup_{N \in S} F_N)$, 
\item $\textit{init} = (\bigcup_{N \in S} \textit{init}_N)$, $\textit{fin} = (\bigcup_{N \in S} \textit{fin}_N)$,
\item $\mathcal{L} = (\bigcup_{N \in S} \mathcal{L}_N)$ and $\mu = (\bigcup_{N \in S} \mu_N)$.
\end{itemize}
\end{definition}
We remark that since the shared places of OPNs become internal places when we compose them, in general for sets $S_1$ and $S_2$ of pairwise composable OPNs we do not have $\textit{compose}(S_1 \cup S_2) = \textit{compose}(S_1 \cup \textit{compose}(S_2))$.

\begin{figure}[t] 
    \centering
    \includegraphics[scale=0.55]{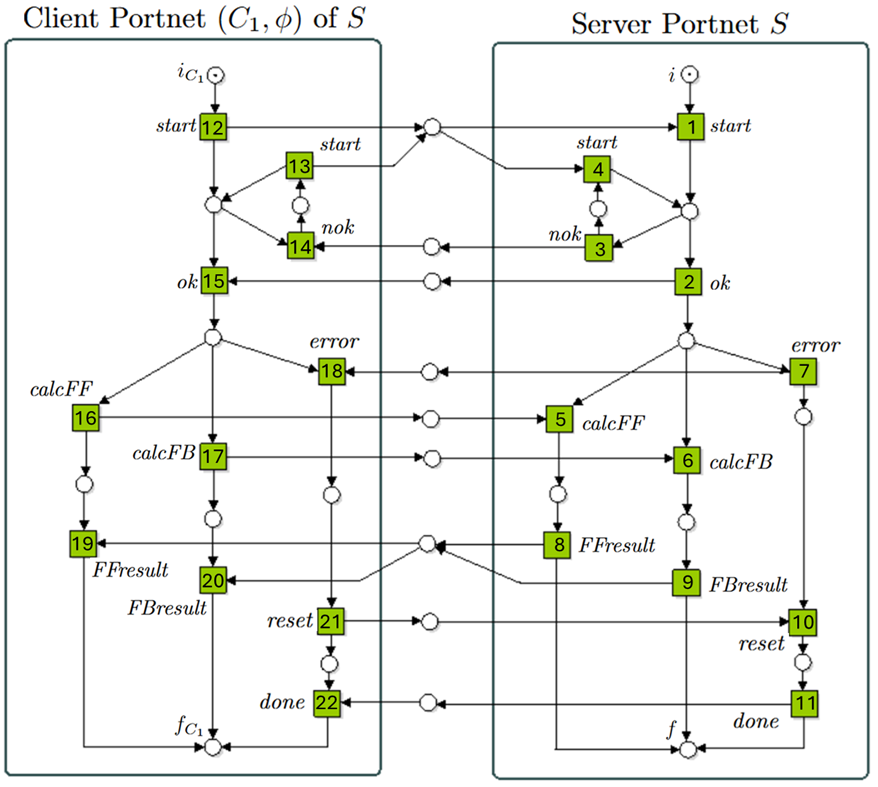}
    \caption{Deadlock in a Composition of Labeled Portnet and its Mirror}
    \label{fig:deadlocked}
\end{figure}
The \emph{partial mirror} relation relaxes the isomorphism requirement on skeletons of a pair of portnets \cite{bera2011component} by allowing one to drop some transitions with a \emph{send} direction, offering designers the option to exercise only the relevant parts of the other labeled portnet.

\begin{definition}\label{def:partial_mirror}
A \emph{partial mirror} of a labeled portnet $N$ is a pair $(M, \phi)$ where $M$ is a labeled portnet and $\phi: M \xrightarrow{} N$ an injective mapping satisfying:

\begin{itemize}
    \item $\phi(P_M) \subseteq P_N$, $\phi(T_M) \subseteq T_N$, 
    $\phi(I_M) \subseteq O_N$ and $\phi(O_M) \subseteq I_N$, 
    \item $F_M$ satisfies that for all $(x,y) \in F_M$:
    \begin{itemize}
        \item if $x \notin I_M$ and $y \notin O_M$, then $(\phi(x),\phi(y)) \in F_N$,
        \item if $x \in I_M$ or $y \in O_M$, then $(\phi(y),\phi(x)) \in F_N$,
    \end{itemize}
    \item $\phi(\textit{init}_M) = \textit{init}_N$ and $\phi(\textit{fin}_M) = \textit{fin}_N$,
    \item for all $t \in T_M$ we have $\mu_M(t) = \mu_N(\phi(t))$,
    \item for all places $p \in P_M$ and all transitions $t \in \post{\phi(p)}$ in $N$, if $\lambda(t) = \textit{send}$ then there is some $t' \in \post{p}$ such that 
    $\phi(t') = t$.
\end{itemize}
\end{definition}
We say that $(M,\phi)$ is a \emph{mirrored labeled portnet} in case $\phi$ is a bijection. 
The labeled portnet $M$ is referred to as the \emph{client portnet} of the \emph{server portnet} $N$.

A partial mirror relation by itself does not guarantee weak termination of the composition of a labeled portnet and its (partial) mirror. 
Fig.~\ref{fig:deadlocked} shows an example of such a composition w.r.t. $\phi$, where one of the deadlocks can be reached by firing the sequence of transitions $\langle 12, 1, 2, 15, 7, 16 \rangle$.

\section{Well-Formed Labeled Portnets}

In~\cite{bera2011component}, a notion of compatibility between pairs of ordinary portnets was introduced that guarantees weak termination of their composition.
This notion of compatibility requires the skeletons of the ordinary portnets to satisfy the \emph{choice} and \emph{leg} property; for the sake of completeness, we repeat these below.
These are structural properties of the ordinary portnets that can be checked efficiently.

The \emph{choice property} does not allow transitions in the postset of a place to present a race between an input and an output: either all such transitions perform a send, or receive, but not a mix of both. As a result it is not possible to model a race between a client and server portnet. 
The \emph{leg property} prevents a client and server portnet from getting confused and going down different paths. 
By requiring every path from a split (a place with postset greater than one, including initial place) to a join (a place with preset greater than one, including final place) to have at least two transitions with different communication direction, confusions are prevented. 
\begin{definition}[\textbf{Choice and Leg Property}]\label{def:choice_leg_property}
A portnet $N$ satisfies the \emph{choice property} if all $p \in P_N$ satisfy
that for all $t, t' \in \post{p}$ we have $\lambda(t) = \lambda(t')$. 
It satisfies \emph{leg property} if $\forall \beta = \langle p_1, t_1, \ldots, t_{n-1}, p_n \rangle \in PS(N): (|\post{p_1}| > 1 \vee p_1 = \textit{init}_N) \wedge (|\pre{p_n}| > 1 \vee p_n = \textit{fin}_N) : \exists t,t' \in \beta : \lambda(t) \neq \tau \wedge \lambda(t') \neq \tau \wedge \lambda(t) \neq \lambda(t')$. 
\end{definition}
This section presents a relaxation of these properties by introducing the 
\emph{observable choices}, the \emph{diamond} and \emph{loop} properties.
Moreover, we consider these properties in the more general setting of labeled portnets. 
As a consequence, the class of labeled portnets satisfying these properties encompasses a larger variety of interaction protocols, commonly encountered in practice.\smallskip

\begin{figure}[t]
    \centering 
    \includegraphics[scale=0.65]{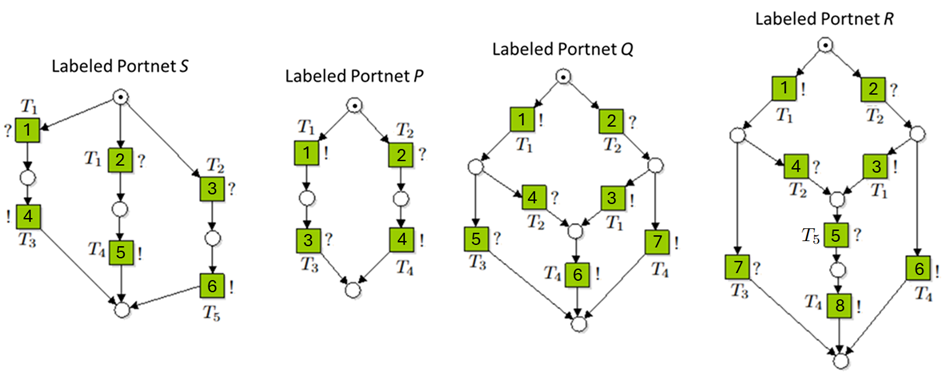}
    \caption{Portnet Properties}
    \label{fig:labeledPortnetProperties}
\end{figure}

A labeled portnet satisfies the \emph{observable choices} property if distinct transitions in the postset of a place have different labels. 
This enables one to decide, based on observing labels only, exactly which non-interface place of a labeled portnet becomes marked upon firing a sequence of transitions.
For instance, portnet $S$ in Fig.~\ref{fig:labeledPortnetProperties} violates this property, while the others satisfy it. 
\begin{definition}[\textbf{Observable Choices}]
A labeled portnet satisfies the observable choices property if all $p \in P$, and all distinct $t, t' \in \post{p}$ we have $\mu(t) \neq \mu(t')$. 
\end{definition}
The \emph{diamond property}, stated below, relaxes the aforementioned choice property by allowing a race between a client and server to initiate sending of messages to each other from a split place. 
Intuitively, such a race is benign if, whichever way the race is resolved, the option to handle the interaction that lost the race persists and a subsequent handling of that interaction still has the option to lead to an aligned view on each-other's state. 
This intuition is formalized below.

\begin{definition}[\textbf{Diamond Property}]
A labeled portnet satisfies the diamond property if all its places $p \in P$ 
and all $t,t' \in \post{p}$ with $\lambda(t) \neq \lambda(t')$ satisfy 
$$ 
\forall q \in \post{t},q' \in \post{t'}: \exists u \in \post{q}, u' \in \post{q'}: 
\post{u} \cap \post{u'} \neq \emptyset \wedge \mu(t) = \mu(u') \wedge \mu(t') = \mu(u)
$$
\end{definition}
\begin{property} If a labeled portnet satisfies the choice property then it also satisfies the diamond property.
\end{property}
Labeled portnet $S$, depicted in Fig.~\ref{fig:labeledPortnetProperties}, satisfies the choice property, while the other labeled portnets violate it. 
Labeled portnet $P$ violates the diamond property, while labeled portnets $Q$ and $R$ satisfy it.

\begin{definition}
For a given labeled portnet $N$ with place $p \in P$, transition $t$ and label $l \in \mL$, we define $\textit{pathset}(p,t,l)$ as follows:
\[
\begin{array}{ll}
\{ \langle p \rangle \circ \pi \in \textit{PS}(N) \mid &
\exists \sigma \in T^*, t'' \in T: \pi_{|T} = \langle t \rangle \circ \sigma \circ \langle t'' \rangle \wedge {} \\
& \mu(t'') = l \wedge \forall u \in T: u \in \sigma \Rightarrow \mu(u) \neq l
\}
\end{array}
\]
\end{definition}

\begin{figure}[t]
    \centering
    \includegraphics[scale=0.57]{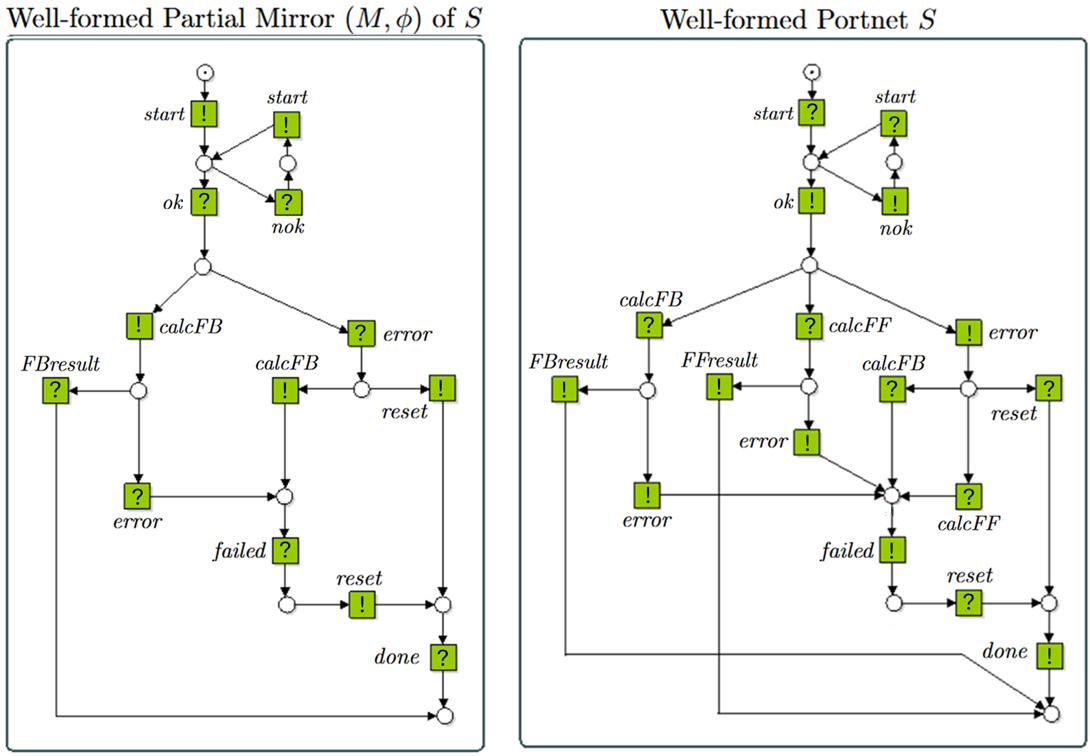}
    \caption{Well-formed portnet and its well-formed partial mirror portnet}
    \label{fig:wellFormedPortnets}
\end{figure}

The \emph{loop property}, formalized below, requires any path that starts with resolving competing receiving (resp.\ sending) interactions by handling one of them, to also handle the interaction with the other one, but not before a send (resp.\ receive) interaction occurred. 
The goal is to prevent confusion between client and server, causing both to go down different paths. For instance, consider send transitions $3$ and $7$ in the postset of a split in portnet $Q$ (Fig.~\ref{fig:labeledPortnetProperties}). 
Suppose $Q$ fires the sequence of \emph{send} transitions $\langle 3, 6 \rangle$. 
Then its partial mirror $(X,\phi)$ may choose to fire its \emph{receive} transition $t$, where $\phi(t) = 7$ (since $6$ and $7$ have the same label $T_4$, they are connected to the same interface place), resulting in leftover tokens in the interface place of transition $3$. 
Portnet $R$ resolves this by forcing a synchronization with \emph{receive} transition $5$. 
\begin{definition}[\textbf{Loop Property}]
A labeled portnet satisfies the loop property if all places $p \in P$ 
satisfy that all distinct transitions $t, t' \in \post{p}$ with $\lambda(t) = \lambda(t')$ and all paths $\sigma \in \hbox{pathset}(p,t,\mu(t'))$ have a transition $t'' \in \sigma$ with $\lambda(t'') \neq \lambda(t)$.

\end{definition}

\begin{definition}[\textbf{Well-Formed Portnet}]
A labeled portnet is \emph{well-formed} if it satisfies the observable choices, 
the diamond and the loop properties. 
\end{definition}
The portnet $S$ depicted in Fig.~\ref{fig:wellFormedPortnets} is a modification of portnet $S$, depicted in Fig.~\ref{fig:deadlocked}; the modifications ensure the portnet is well-formed. Its well-formed partial mirror $M$ drops the transition labeled \textit{calcFF} and as a consequence, also the conflict transitions \textit{FFResult} and \textit{error} to satisfy that the skeleton of $M$ remains a WFN.
\begin{figure}[t] 
    \centering
    \includegraphics[scale=0.29]{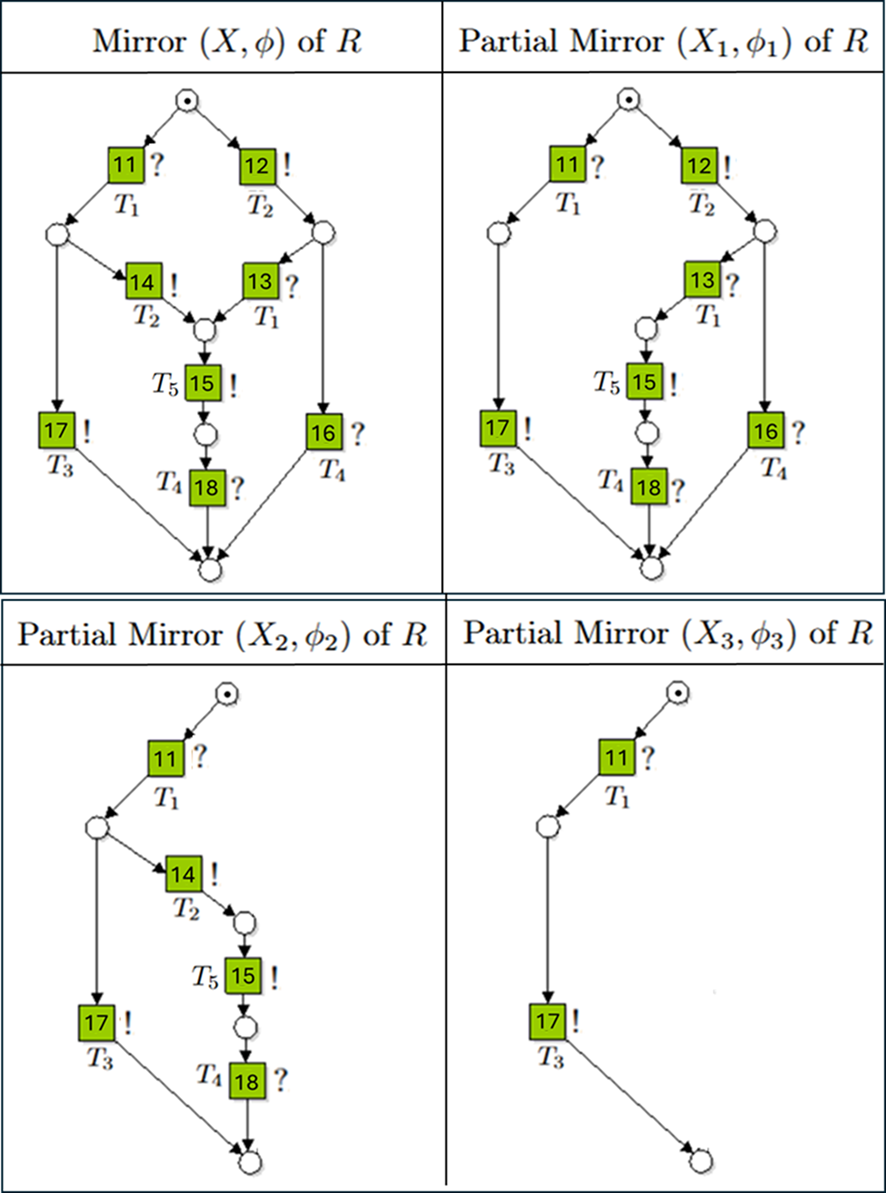}
    \caption{Partial mirrors of well-formed portnet $R$ in Fig.~\ref{fig:labeledPortnetProperties}}
    \label{fig:partialMirrorsNotWellFormed}
\end{figure}
We remark that well-formedness of labeled portnets is not preserved by partial mirroring.
To see this, consider the well-formed portnet $R$, depicted in Fig.~\ref{fig:labeledPortnetProperties} and its partial mirrors in Fig.~\ref{fig:partialMirrorsNotWellFormed}, where $\phi(11) = \phi_1(11) = \phi_2(11) = \phi_3(11) = 1$, 
$\phi(12) = \phi_1(12) = 2$, 
$\phi(13) = \phi_1(13) = 3$, 
$\phi(14) = \phi_2(14) = 4$, 
$\phi(15) = \phi_1(15) = \phi_2(15) = 5$, 
$\phi(16) = \phi_1(16) = 6$, 
$\phi(17) = \phi_1(17) = \phi_2(17) = \phi_3(17) = 7$, and $\phi(18) = \phi_1(18) = \phi_2(18) = 8$.
The partial mirror $(X_1,\phi_1)$ is not well-formed since it violates the diamond property. However, the two other partial mirrors $(X_2,\phi_2)$ and $(X_3,\phi_3)$ are well-formed. The mirrored labeled portnet $(X,\phi)$ is also well-formed. 
\section{Weak Termination of Composition of Labeled Portnets}
As demonstrated by the example in Fig.~\ref{fig:deadlocked}, the composition of a labeled portnet and its partial mirror does not necessarily lead to a weakly terminating OPN.
In this section, we show that weak termination of a composition of two labeled portnets $N$ and $M$ \emph{is} guaranteed when both are well-formed and $M$ is a partial mirror of $N$. 
Indeed, the main theorem of this section is the following:
\begin{theorem}\label{thm:wt_portnets}
Let $N$ and $M$ be well-formed labeled portnets and let $(M,\phi)$ be a partial mirror of $N$. 
Let $S = \textit{compose}(\{N,M\})$ and $i_S = i_N + i_M$ be the initial marking of $S$ and let $f_S = f_N + f_M$ be the final marking of $S$.
Then the net system $(S, i_S, f_S)$ weakly terminates.
\end{theorem}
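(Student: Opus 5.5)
The plan is to prove weak termination by a \emph{reduction to a stable configuration}: from any reachable marking $m$ of $(S,i_S,f_S)$ we construct a firing sequence leading to a marking in which the interface places are empty and the client and server are in corresponding states, i.e.\ the unique marked place $p_M$ of $M$ and the unique marked place $p_N$ of $N$ satisfy $\phi(p_M)=p_N$. Because both skeletons are S-OWNs with single initial and final places, every reachable marking marks exactly one place of $P_N$ and one of $P_M$. Once $m$ is stable, we drive both nets to their final places in lock-step. Take a path from $p_M$ to $\textit{fin}_M$ in the skeleton of $M$, which exists by Corollary~\ref{cor:wt_skeleton}. Fire each transition $t$ of $M$ on this path together with its image $\phi(t)$ in $N$, letting the sender go first and the receiver consume the message immediately. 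The flow conditions of Definition~\ref{def:partial_mirror} reverse interface arcs, so a send in $M$ corresponds to a receive in $N$ on the same interface place and vice versa; the label condition and the fourth condition on labeled portnets ensure the token is routed correctly. Since $\phi(\textit{fin}_M)=\textit{fin}_N$, this reaches $f_S$.

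The core is therefore a lemma: from every reachable marking a stable marking is reachable. To prove it, I would first establish an invariant on reachable markings by induction on the firing sequence. Let $\sigma_N$ and $\sigma_M$ be the projections of the firing sequence onto $T_N$ and $T_M$. The invariant is that the label sequences $\mu(\sigma_N)$ and $\mu(\phi(\sigma_M))$ differ only by an ``in-flight'' part:
\begin{itemize}
\item each party's sequence is a prefix of the other up to sends not yet received;
\item the interface places contain exactly the messages sent but not yet consumed.
\end{itemize}
Observable choices is the key tool here. It makes a label sequence determine the place reached in each net, and the injectivity and label preservation of $\phi$ transfer this between $M$ and $N$. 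Consequently, when the two parties have executed the same label sequence, they sit at corresponding places. Deviations arise only at places where a send and a receive race, or where two transitions with the same direction compete.

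Next, I would show that from any reachable marking we can strictly decrease a measure, namely the total number of tokens in interface places plus the length of the ``unsynchronised'' suffix, until it reaches zero. There are two cases.
\begin{itemize}
\item \emph{Race case.} One party has sent with $t$ while the other sent with $t'$ from corresponding places, with $\lambda(t)\neq\lambda(t')$ at the server's split. The diamond property, applied in $N$ (and, via well-formedness, in $M$), supplies continuations $u,u'$ with swapped labels that rejoin at a common place. Each side fires its $u$, consuming the other's message, and the configuration becomes stable again. For the partial mirror we need that the required transitions exist in $M$; this follows from the last clause of Definition~\ref{def:partial_mirror}, which guarantees that server sends enabled from a mirrored place are mirrored.
\item \emph{Same-direction case.} Here the sender's choice is simply followed by the receiver, except when the receiver has no matching enabled transition. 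The loop property then guarantees that any path handling one competing message must interact in the opposite direction before handling the other. This forbids the scenario in which one side ``skips over'' a pending message while the other waits, as in the example with portnet $Q$, and so the pending message is always eventually receivable.
\end{itemize}

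The main obstacle is making the invariant precise enough to cover the combination of races and same-label transitions. Labels may repeat across different places, so a label sequence alone does not identify a path unless the starting place is fixed, and pending messages can accumulate in an interface place shared by several transitions. I would handle this by tracking, for each party, the place reached after the longest common synchronized label prefix, and arguing inductively over that prefix using observable choices. The diamond and loop properties are then invoked only locally, at the first divergence point. The subtle step is showing the divergence is always at most one ``round'' deep: the loop property forces a direction switch before a second competing message can be handled, which in turn blocks further unilateral progress until synchronization occurs. I would try to prove this bounded-divergence claim first, since everything else reduces to it.
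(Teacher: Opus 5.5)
Your overall decomposition matches the paper's. You first reach a stable marking (Proposition~\ref{lemma:synchronisable}), then fire a shortest skeleton path of $M$ together with its $\phi$-image, sender first (Proposition~\ref{lemma:homing_sequence}); your second half is essentially the paper's. The gap is in the first half, which you leave as an unproved ``bounded-divergence claim'', and the route you sketch for it fails as stated.

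First, the label-sequence invariant is false. Suppose a race at $p$ is resolved via the diamond property. The server has then executed labels $w\,a\,b$ and the client $w\,b\,a$, the interface places are empty, and both sit at corresponding places. So neither sequence is a prefix of the other up to in-flight messages. Your measure, the length of the unsynchronised suffix after the longest common label prefix, is therefore nonzero at a stable marking.

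Second, divergence is not local to the first split, and unilateral progress is not blocked. Take $N$ with $i \xrightarrow{!a} q$, $i \xrightarrow{?b} q'$, $q \xrightarrow{?b} r$, $q' \xrightarrow{!a} r$, $q \xrightarrow{!c} s$, $r \xrightarrow{!c} f$, $s \xrightarrow{?b} f$, which is well-formed, together with its full mirror. The server can fire $!a$ and then $!c$ while the client's $b$ is pending. That the server can still consume $b$ at $s$ follows from the diamond property at $q$, not from anything at the divergence point $i$. So the diamond property must be applied repeatedly along all of the sender's later sends, as Lemma~\ref{lem:diamond_path} does, not ``only locally, at the first divergence point''.

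The missing idea is the right invariant: from every reachable marking there is a catch-up sequence consisting \emph{only of receive transitions} (say those of $N$ first, then those of $M$) ending in a stable marking. You prove this by induction on the firing sequence from $i_S$, with no measure needed.
\begin{itemize}
\item \emph{Send step.} If the next transition is a send of one party, push it past that party's pending catch-up receives by iterating the diamond property. The other party's receives are unaffected. Then append one receive of the other party with the same label. In $M$ this receive exists by the last clause of Definition~\ref{def:partial_mirror}; in $N$ it is the $\phi$-image of the client's send.
\item \emph{Receive step.} Suppose the next transition $t$ is a receive different from that party's first catch-up receive $t'$. Then both are enabled at the same state-machine place $p$, and both interface places are marked. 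The catch-up empties all interface places using only that party's receives. It therefore yields a path in $\textit{pathset}(p,t',\mu(t))$ with no transition of opposite direction, contradicting the loop property. Hence $t$ is the first catch-up step, using Lemma~\ref{lem:swapping} to reorder when it belongs to the other party.
\end{itemize}
Your race and same-direction cases are the seeds of these two steps. Without the receive-only catch-up invariant, however, you have no statement that survives arbitrary interleavings of sends and pending receives.
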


In the remainder of this section we break down the correctness argument of the theorem before ultimately presenting its proof.

Fix a well-formed labeled portnet $N$---from hereon the \emph{server}---and a well-formed partial mirror $(M,\phi)$ of $N$---from hereon the \emph{client}.
Suppose that in their interactions, at some point the server wishes to execute a send transition.
The client may still be catching up with previously executed send transitions, due to the asynchronous nature of their communication.
However, due to the diamond property, it does not matter when exactly the server fires its send transition, since it will have the same effect on the overall state of the system.
The following lemma formalizes this phenomenon. 
\begin{lemma}\label{lem:diamond_path}
Let $N$ and $M$ be well-formed labeled portnets and let $(M,\phi)$ be a partial mirror of $N$.  
Let $S = \textit{compose}(\{N,M\})$ and $i_S = i_N + i_M$ be the initial marking of $S$ and $f_S = f_N + f_M$ be the final marking of $S$.
For any marking $m, m', m_0,\dots,m_n \in \mathcal{R}(S,i_S)$, and all transitions $t,t_1,\dots,t_n$ of $N$ such that $m = m_0 \xrightarrow{t_1} \dots \xrightarrow{t_n} m_n$, where each $\lambda(t_i) = receive$, and $m \xrightarrow{t} m'$ with $\lambda(t) = send$, there exists markings $m_0',\dots,m_n' \in \mathcal{R}(S,i_S)$ and transitions $\bar{t},t_1',\dots,t_n'$ of $N$ such that $m' =  m_0' \xrightarrow{t_1'} \dots \xrightarrow{t_n'} m_n'$ and $m_n \xrightarrow{\bar{t}} m_n'$ and 
\begin{enumerate}
\item $\mu(t_i) = \mu(t_i')$ and $\lambda(t_i) = \lambda(t_i')$;
\item $\mu(t) = \mu(\bar{t})$ and $\lambda(t) = \lambda(\bar{t})$.
\end{enumerate}
\end{lemma}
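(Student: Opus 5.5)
Since $N$ is a state machine with a single initial token, every reachable marking of $S$ marks exactly one place of $P_N$. Write $p_i$ for the $N$-place marked in $m_i$, so $p_{i-1}\in\pre{t_i}$ and $\post{t_i}=\{p_i\}$. Let $p=p_0\in\pre{t}$ and $\post{t}=\{q\}$. The plan is an induction on $n$, pushing the send $t$ one receive at a time past $t_1,\dots,t_n$. The case $n=0$ is trivial: take $\bar t=t$ and $m_0'=m'$.

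For the inductive step, consider the first receive $t_1$ and the send $t$, both in $\post{p}$ and with $\lambda(t_1)\neq\lambda(t)$. Apply the diamond property with $q\in\post{t}$ and $p_1\in\post{t_1}$. It yields $u\in\post{q}$ and $u_1\in\post{p_1}$ with $\mu(u)=\mu(t_1)$, $\mu(u_1)=\mu(t)$, and a common place $r\in\post{u}\cap\post{u_1}$.

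By Lemma~\ref{lm:communication_direction}, $\lambda(u)=\emph{receive}$ and $\lambda(u_1)=\emph{send}$. By the fourth labeled-portnet condition, $u$ reads from the same input place as $t_1$, and $u_1$ writes to the same output place as $t$. Since $t_1$ is enabled in $m$, that input place is marked in $m$. Firing $t$ only changes the $N$-token and adds a token to an output place of $N$, so the input place stays marked in $m'$. Hence $u$ is enabled in $m'$; set $t_1'=u$ and $m'\xrightarrow{u}m_1'$. Symmetrically, $u_1$ is a send transition and sends are always enabled once their skeleton place is marked, so $m_1\xrightarrow{u_1}m''$.

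Now compare $m_1'$ and $m''$. Both put the $N$-token on $r$, leave the $M$-part untouched, consume one token from the input place of $t_1$, and produce one token on the output place of $t$. So $m_1'=m''$. This gives a commuting square $m\xrightarrow{t}m'\xrightarrow{u}m_1'$ and $m\xrightarrow{t_1}m_1\xrightarrow{u_1}m_1'$, with labels and directions preserved.

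To continue, observe that from $m_1$ we have the receives $t_2,\dots,t_n$ and the send $u_1$ enabled. The induction hypothesis applied to $m_1$, $u_1$ and $t_2,\dots,t_n$ then produces $t_2',\dots,t_n'$ from $m_1'$ and $\bar t$ from $m_n$ with the required labels and directions. Chaining this with the first square completes the construction. Reachability of all the new markings is immediate, since each is obtained from $m$ or $m'$ by firing.

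The main obstacle is the bookkeeping of the induction. The hypothesis must be stated for arbitrary reachable $m$ and arbitrary $N$-transitions, not the fixed $t$, so that the replaced send $u_1$ can be fed back in. The other delicate point is enabledness of $t_i'$ at later stages. Here I will argue that the interface tokens consumed by $t_{i}'$ are exactly those consumed by $t_i$ in the original run, because labels determine interface places. Those tokens were present along the original sequence, and sends by $N$ never remove input tokens. So the enabledness argument generalises from the first step: each $u$ only needs its input place marked, and the multiset of input places consumed along the primed path coincides with that along the original one.
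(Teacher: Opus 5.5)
Your proof is correct and follows essentially the paper's route: an induction on $n$ that tiles one diamond square per receive transition, using that the send and the receive are both enabled at the unique marked place of $N$. The only difference is orientation: you resolve the square at $m_0$ first and feed the replacement send $u_1$ into the hypothesis for $t_2,\dots,t_n$ from $m_1$, whereas the paper applies the hypothesis to the prefix and places the single square at $m_n$ between $\bar t$ and $t_{n+1}$. Your explicit check that $u$ is enabled in $m'$ and that both routes reach the same marking fills in what the paper leaves to ``by the diamond property'', and it also makes your final paragraph on enabledness at later stages unnecessary.
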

\begin{proof}
We use an induction on the length of the sequence $t_1\dots t_n$ of receive transitions in portnet $N$ (the case where these transitions are from $M$ is analogous).

\begin{itemize}
\item Base case, $n = 0$. Then there is no receive transition and the statement holds vacuously.

\item Inductive step. Suppose the statement holds true of a sequence $m_0 \xrightarrow{t_1}\dots\xrightarrow{t_n} m_n$.
Suppose that $m_n \xrightarrow{t_{n+1}} m_{n+1}$ for some receive transition $t_{n+1}$, fired by portnet $N$, and $m_0 \xrightarrow{t} m_0'$, with $\lambda(t) = send$.
By the induction hypothesis, we know that $m_0' \xrightarrow{t_1'} m_1' \dots\xrightarrow{t_n'} m_n'$, with $\mu(t_i) = \mu(t_i')$ and $\lambda(t_i) = \lambda(t_i')$.
Moreover, $m_n \xrightarrow{\bar{t}} m_n'$ for some $\bar{t}$ satisfying $\mu(t) = \mu(\bar{t})$ and $\lambda(\bar{t}) = send$.
Since $m_n \xrightarrow{t_{n+1}} m_{n+1}$ and $m_n \xrightarrow{\bar{t}} m_n'$, we find that $\pre{t_{n+1}} \cap \pre{\bar{t}} \neq \emptyset$.
By the diamond property, we then find that there must be some $\tilde{t}$ and $m_{n+1}'$ such that $m_{n+1} \xrightarrow{\tilde{t}} m_{n+1}'$ and some $t'_{n+1}$ such that $m_n' \xrightarrow{t'_{n+1}} m_{n+1}'$ and $\mu(\tilde{t}) = \mu(\bar{t})$ and $\mu(t'_{n+1}) = \mu(t_{n+1})$, finishing the proof. 
\qed
\end{itemize}
\end{proof}
Similar to the exchange lemma (see~\cite{desel1995free}), we show that an enabled transition can be fired anywhere in a firing sequence, as long as the transition does not share any place with the preset of transitions in that executable firing sequence. 
\begin{lemma}
\label{shuffle_lemma}
Let $(N,m_0,m_f)$ be  a net system, with $N$ a labeled OPN.
Suppose $t \in T$ and $\sigma \in T^*$ are such that for all $t' \in \sigma$, $\pre{t'}$ is disjoint with $\pre{t} \cup \post{t}$.
Then for every $m \in \mathcal{R}(N,m_0,m_f)$ satisfying $m \xrightarrow{t}$ and $m \xrightarrow{\sigma}$ there is some $m'$ such that for every $\sigma' \in (\langle t \rangle || \sigma)$ we have $m \xrightarrow{\sigma'} m'$.
\end{lemma}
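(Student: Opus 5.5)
We prove the statement by induction on $|\sigma|$, showing the stronger claim that $t$ can be inserted at any position of $\sigma$ and every resulting sequence reaches the same marking $m' = m - \pre{t} + \post{t} + \Delta(\sigma)$. Here $\Delta(\sigma)$ denotes the net effect $\sum_{u\in\sigma}(\post{u}-\pre{u})$, counted with multiplicity. Because the marking equation is additive, every sequence in $\langle t\rangle || \sigma$ that is executable from $m$ ends in this same $m'$. The real content is therefore \emph{enabledness}: each interleaving $\sigma'$ must actually be firable from $m$.

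Write $\sigma' = \sigma_1 \circ \langle t \rangle \circ \sigma_2$ with $\sigma = \sigma_1\circ\sigma_2$. Firing $\sigma'$ amounts to three checks.

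\textbf{Firing $\sigma_1$.} From $m$, the prefix $\sigma_1$ is firable, since it is a prefix of $\sigma$ and $m \xrightarrow{\sigma}$; let $m \xrightarrow{\sigma_1} m_1$.

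\textbf{Firing $t$ at $m_1$.} We need $\pre{t} \le m_1$. Transitions in $\sigma_1$ only consume tokens from their presets, and these are disjoint from $\pre{t}$ by hypothesis. So for every $p\in\pre{t}$ we have $m_1(p) \ge m(p) \ge 1$, the last inequality because $t$ is enabled at $m$. Hence $m_1 \xrightarrow{t} m_1'$, with $m_1' = m_1 - \pre{t} + \post{t}$.

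\textbf{Firing $\sigma_2$ at $m_1'$.} We know $\sigma_2$ is firable from $m_1$, and we show it stays firable from $m_1'$. Write $\sigma_2 = \langle u_1,\dots,u_k\rangle$ and let $m_1 = x_0 \xrightarrow{u_1} x_1 \cdots$. Define $x_j' = x_j - \pre{t} + \post{t}$. We check by induction on $j$ that $x_j' \ge 0$ and $\pre{u_{j+1}} \le x_j'$.
\begin{itemize}
\item For $p \in \pre{u_{j+1}}$: the hypothesis gives $p\notin \pre{t}\cup\post{t}$, so $x_j'(p) = x_j(p) \ge 1$.
\item For $p\in\pre{t}$: no $u_i$ consumes from $p$, so $x_j(p) \ge m_1(p) \ge 1$, and subtracting $\pre{t}(p)=1$ leaves a nonnegative value.
\end{itemize}
Hence $x_j \xrightarrow{u_{j+1}} x_{j+1}$ transfers to $x_j' \xrightarrow{u_{j+1}} x_{j+1}'$. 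The final marking is $x_k' = m'$.

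The only delicate point is the nonnegativity of $x_j'$ on $\pre{t}$. It relies on the disjointness hypothesis covering $\pre{t}$: tokens $t$ needs are never consumed by $\sigma$. The condition on $\post{t}$ is not strictly required for enabledness, since adding tokens never disables a transition. Weights are $0/1$ because $F$ is a relation, so no multiplicity issues arise. Membership of the intermediate markings in $\mathcal{R}(N,m_0)$ is immediate, since they are reached from the reachable marking $m$.
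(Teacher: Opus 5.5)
Your proof is correct and takes the same route as the paper: you write $\sigma' = \sigma_1 \circ \langle t\rangle \circ \sigma_2$, use that $\sigma_1$ never consumes from $\pre{t}$ to enable $t$, show that $\sigma_2$ stays firable afterwards, and get the common end marking from the marking equation. Your suffix step is more precise than the paper's, which cites disjointness of $\post{t}$ with $\pre{\sigma_2}$, whereas the relevant fact (as you note) is that $t$ only removes tokens from $\pre{t}$, which is disjoint from $\pre{\sigma_2}$; also, your opening announcement of an induction on $|\sigma|$ is unnecessary, since the argument you then give is direct.
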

\begin{proof}
By the marking equation lemma \cite{desel1995free}, all possible sequences in $(\langle t \rangle ||\sigma)$ lead to the same resulting marking (provided the number of occurrences of transitions are the same). Hence, we only need to show that any sequence in $(\langle t \rangle ||\sigma)$ is an executable firing sequence. 
Let $\alpha'$ and $\alpha''$ such that $\alpha = \alpha' \circ \langle t \rangle \circ \alpha''$, i.e., $\sigma = \alpha' \circ \alpha''$ and $\alpha \in (\langle t \rangle || \sigma)$. Hence $m \xrightarrow{\alpha'} \tilde{m} \xrightarrow{\alpha''} \tilde{m}'$ for some $\tilde{m}, \tilde{m}' \in \mathcal{R}(N,m)$. As $\pre{t}$ and $\pre{\alpha}'$ are disjoint, transition $t$ is enabled in $\tilde{m}$. Consequently, a marking $\tilde{m}'' \in \mathcal{R}(N,m)$ exists with $\tilde{m} \xrightarrow{t} \tilde{m}''$. 
Since $\post{t}$ and $\pre{\alpha}''$ are disjoint, 
$\tilde{m}'' \xrightarrow{\alpha''} m'$, which proves the lemma. 
\qed
\end{proof}
A receive transition of a labeled portnet and a receive transition of its partial mirror do not share an interface place. 
So in a composition of a well-formed labeled portnet and its well-formed partial mirror, a sequence of receive transitions of the former may be shuffled with the latter, leading to the same marking. 
\begin{lemma}\label{lem:swapping}
Let $N$ and $M$ be well-formed labeled portnets and let $(M,\phi)$ be a partial mirror of $N$. 
Let $S = \textit{compose}(\{N,M\})$ and $i_S = i_N + i_M$ be the initial marking of $S$ and $f_S = f_N + f_M$ be the final marking of $S$. 
For any reachable marking $m_0,\dots,m_n$, and all transitions $t_1,\dots,t_n \in T_N$ and $u_1,\dots,u_k \in T_M$ such that
$m_0 \xrightarrow{t_1} \dots \xrightarrow{t_n} m_n \xrightarrow{u_1} \dots \xrightarrow{u_k} m_{n+k}$ and $\lambda(t_i) = \lambda(u_j) = receive$ for all relevant $i,j$, there are markings $m_0',\dots, m_{n+k}'$ such that $m_0 = m_0' \xrightarrow{u_1} \dots \xrightarrow{u_k} m_k' \xrightarrow{t_1} \dots \xrightarrow{t_n} m_{n+k}' = m_{n+k}$.
\end{lemma}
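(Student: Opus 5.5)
The plan is to move the receive transitions $u_1,\dots,u_k$ of $M$ one at a time to the front of the sequence, across the whole block $t_1\dots t_n$ of receive transitions of $N$. The tool is the shuffle lemma (Lemma~\ref{shuffle_lemma}), applied repeatedly in an induction on $k$, with a nested induction on $n$ if that turns out to be more convenient.

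The first step is the structural fact that $N$-transitions and $M$-transitions never interfere. A receive transition $t \in T_N$ has exactly one interface place in its preset, namely some $x \in I_N$. Its other pre- and post-places lie in $P_N$, because the skeleton is an S-net and the second labeled-portnet condition allows only one interface place per transition. Likewise a receive transition $u \in T_M$ has preset $\{p, y\}$ with $p \in P_M$ and $y \in I_M$, and postset contained in $P_M$. By composability, $P_N \cap P_M = \emptyset$ and $I_N \cap I_M = \emptyset$; the latter holds because the two nets share interface places, since $\phi(I_M) \subseteq O_N$ forces $I_M \cap O_N \neq \emptyset$ whenever $M$ has inputs, and the case $I_M=\emptyset$ is trivial. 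In the composition, the places of $I_N$ and $I_M$ become distinct internal places. Hence $\pre{u} \cap (\pre{t} \cup \post{t}) = \emptyset$ and $\pre{t} \cap (\pre{u} \cup \post{u}) = \emptyset$ for all such $t$ and $u$.

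Next comes the induction on $k$. Consider $u_1$, which is enabled at $m_n$. Since $t_1,\dots,t_n$ only consume tokens from places in $P_N \cup I_N$ and only produce tokens into $P_N$, the places in $\pre{u_1} \subseteq P_M \cup I_M$ carry the same number of tokens in $m_0$ as in $m_n$. So $u_1$ is enabled at $m_0$. I then apply Lemma~\ref{shuffle_lemma} with $t := u_1$ and $\sigma := t_1\dots t_n$. Its hypothesis is exactly the disjointness shown in the first step, and both $m_0 \xrightarrow{u_1}$ and $m_0 \xrightarrow{\sigma}$ hold. The lemma gives that $m_0 \xrightarrow{u_1} m_1' \xrightarrow{t_1\dots t_n} \hat m$ is executable, and that it reaches the same marking as $t_1\dots t_n u_1$, namely $m_{n+1}$. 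From $m_1'$ the sequence $t_1 \dots t_n u_2 \dots u_k$ is then executable with the same target $m_{n+k}$. Applying the induction hypothesis for $k-1$ to the start marking $m_1'$ moves $u_2,\dots,u_k$ in front of $t_1\dots t_n$. Prefixing $u_1$ yields the required sequence of markings $m_0' = m_0, m_1', \dots, m_{n+k}' = m_{n+k}$. Equality of the final marking also follows directly from the marking equation, since the Parikh vector is unchanged.

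The main obstacle I expect is the enabledness of $u_1$ at $m_0$, because Lemma~\ref{shuffle_lemma} presupposes that $m \xrightarrow{t}$ already holds. This requires the careful observation that no $t_i$ produces a token into $\pre{u_1}$; note that $N$'s receive transitions send nothing to $I_M$. I will argue this via the effect on each place of $\pre{u_1}$ and the disjointness established in the first step. Everything else is routine bookkeeping.
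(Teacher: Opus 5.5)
Your proof is correct and follows the paper's route. The paper's proof is the one-line remark that the claim follows from Lemma~\ref{shuffle_lemma} because receive transitions of $N$ and $M$ share no interface places, which is exactly your disjointness step; the induction on $k$ (needed because Lemma~\ref{shuffle_lemma} moves only one transition) and the check that $u_1$ is already enabled at $m_0$ (since no $t_i$ produces into $P_M \cup I_M$) are details the paper leaves implicit, and you handle both correctly.
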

\begin{proof}
Follows from Lemma~\ref{shuffle_lemma} since receive transitions of a well formed labeled portnet and its partial mirror do not share interface places with each other. 
\qed
\end{proof}
From a reachable marking of a composition of a well-formed server portnet and its well-formed partial mirror client, firing an enabled sequence of receive transitions of the server cannot disable an enabled send transition of a client since it does not consume from an interface place. 
\begin{lemma}\label{lem:swap_send_receive}
Let $N$ and $M$ be well-formed labeled portnets and let $(M,\phi)$ be a partial mirror of $N$. 
Let $S = \textit{compose}(\{N,M\})$ and $i_S = i_N + i_M$ be the initial marking of $S$ and $f_S = f_N + f_M$ be the final marking of $S$. 
For any reachable marking $m_0,\dots,m_n,m_0'$, and all transitions $t_1,\dots,t_n \in T_N$ with $\lambda(t_i) = receive$, and $u \in T_M$ with $\lambda(u) = send$ such that
$m_0 \xrightarrow{t_1} \dots \xrightarrow{t_n} m_n$ and $m_0 \xrightarrow{u} m_0'$, there are markings $m_1',\dots, m_n'$ such that $m_0' \xrightarrow{t_1} \dots \xrightarrow{t_n} m_n'$ and $m_n \xrightarrow{u} m_n'$.
\end{lemma}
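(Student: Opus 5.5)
We apply Lemma~\ref{shuffle_lemma} with $t := u$ and $\sigma := \langle t_1,\dots,t_n\rangle$ in the composed net system $(S, i_S, f_S)$. This needs one structural fact: for every $i$, $\pre{t_i}$ is disjoint from $\pre{u} \cup \post{u}$. Given this, the lemma applies at $m_0$, since $m_0 \xrightarrow{u}$ and $m_0 \xrightarrow{t_1\dots t_n}$. It yields a marking $m''$ such that every interleaving in $\langle u\rangle || \sigma$ reaches $m''$ from $m_0$. Two interleavings are relevant.

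First, take $\langle u \rangle \circ \sigma$. Firing $u$ from $m_0$ gives $m_0'$, and the intermediate markings after each $t_i$ give $m_1',\dots,m_n'$ with $m_0' \xrightarrow{t_1}\dots\xrightarrow{t_n} m_n' = m''$. Second, take $\sigma \circ \langle u\rangle$. Firing $\sigma$ gives $m_n$, and then $m_n \xrightarrow{u} m''$. Hence $m_n \xrightarrow{u} m_n'$. Reachability of all these markings is immediate, since each is reached from the reachable $m_0$.

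The main step is the disjointness. I would split it by node type.

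\emph{Skeleton places.} Every $t_i \in T_N$ and $u \in T_M$. Composability of $N$ and $M$ (Definition~\ref{def:composability}) gives that $P_N \cap P_M = \emptyset$. Since $\pre{t_i} \cap P_N$ contains only places of $N$ and $(\pre{u}\cup\post{u}) \cap P_M$ contains only places of $M$, these parts cannot intersect.

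\emph{Interface places.} By the second portnet condition, $t_i$ touches exactly one interface place. Since $\lambda(t_i) = \textit{receive}$, this place is some $x \in I_N \cap \pre{t_i}$. Likewise $u$ touches exactly one interface place, namely $y \in O_M \cap \post{u}$ because $\lambda(u) = \textit{send}$. The preset $\pre{u}$ contains no interface place. It remains to show $x \neq y$.

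This requires an assumption about how interface places are identified in the composition. One could use the partial mirror map $\phi$, which sends $O_M$ into $I_N$ and may identify $y$ with $x$ under fusion. That is a place $M$ writes to and $N$ reads from, so sharing is expected. The issue is that $x = y$ would mean $u$ produces into $\pre{t_i}$, which is exactly the hypothesis Lemma~\ref{shuffle_lemma} forbids.

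I expect this interface place to be the main obstacle. The hypothesis of Lemma~\ref{shuffle_lemma} is stronger than needed here, because $u$ only adds tokens and $t_i$ consumes them. So I would not invoke Lemma~\ref{shuffle_lemma} directly. Instead I would prove the claim by a short monotonicity argument.

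Firing $u$ adds the bag $\post{u}$ and removes $\pre{u}$. Since $\pre{u} \subseteq P_M$ and $\pre{t_i} \cap P_M = \emptyset$, removing $\pre{u}$ never disables any $t_i$. Since every place in $\post{u}$ is only consumed by $t_i$'s, never removed by them, it stays marked and cannot disable $t_i$ either. Concretely, by induction on $j$ one shows $m_j' = m_j - \pre{u} + \post{u}$. Here $\pre{u} \le m_j$ throughout, because no $t_i$ produces or consumes in $P_M$ (each $t_i$ touches only $P_N$ and $I_N$). Enabledness of $t_{j+1}$ at $m_j'$ then follows from $m_j' \ge m_j - \pre{u}$ together with $\pre{t_{j+1}} \cap \pre{u} = \emptyset$. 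The final equation $m_n \xrightarrow{u} m_n'$ follows from $\pre{u} \le m_n$ and the marking equation.
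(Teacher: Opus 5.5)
Your final monotonicity argument is correct and rests on the same key fact as the paper's one-line proof: $\pre{u} \subseteq P_M$ is disjoint from every $\pre{t_i} \subseteq P_N \cup I_N$, so $u$ stays enabled at $m_n$. Your explicit induction $m_j' = m_j - \pre{u} + \post{u}$ also spells out the half the paper leaves implicit, namely that $t_1\dots t_n$ remain firable from $m_0'$; you are right as well that Lemma~\ref{shuffle_lemma} does not apply verbatim, since the interface place in $\post{u}$ may be the one in $\pre{t_i}$, so dropping that route was correct.
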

\begin{proof}
Since the send transition $u$ does not share its preset with any of the receive transition $t_1 \ldots t_n$, the transition $u$ remains enabled in marking $m_n$. 
\qed
\end{proof}
Building on the previous lemmas, the next proposition shows that for any reachable marking in a composition of a well-formed server portnet and its well-formed partial mirror client, there exists a sequence of receive transitions that can be executed first in the server, then in the client, leading to a marking where a place in the client and its mirror are marked, and all other places are empty. 
This means, informally, that server and client can catch up with each other and align their view on each other's state by processing receive transitions only.
\begin{proposition}\label{lemma:synchronisable}
Let $N$ and $M$ be well-formed labeled portnets and let $(M,\phi)$ be a partial mirror of $N$. 
Let $S = \textit{compose}(\{N,M\})$ and $i_S = i_N + i_M$ be the initial marking of $S$ and $f_S = f_N + f_M$ be the final marking of $S$.
For any marking $m_0,\dots,m_n$, and all transitions $t_1,\dots,t_n$ such that
$i_S = m_0 \xrightarrow{t_1} \dots \xrightarrow{t_n} m_n$, there are markings $m_{n+1},\dots, m_{n+k}$ and transitions $t_{n+1},\dots, t_{n+k}$ such that $m_{n} \xrightarrow{t_{n+1}} \dots \xrightarrow{t_{n+k}} m_{n+k}$ and:
\begin{enumerate}
\item for all $i \geq n+1$, we have $\lambda(t_i) = receive$;
\item $t_{n+1} \dots t_{n+k} \in T_N^*T_M^*$; \emph{i.e.\/} the transitions first execute in $N$ and then in $M$;
\item there is a place $p \in P_M$ such that $m_{n+k}(p) = m_{n+k}(\phi(p)) = 1$ and $m_{n+k}(p') = 0$ for all $p' \notin P_N \cup P_M$; \emph{i.e.\/} all interface places are empty and there is a token in both place $p$ of $M$ and its mirror $\phi(p)$ of $N$. 
\end{enumerate}
\end{proposition}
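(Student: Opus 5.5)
We prove the proposition by induction on the length $n$ of the firing sequence $i_S = m_0 \xrightarrow{t_1} \dots \xrightarrow{t_n} m_n$. We carry a stronger invariant that is easier to propagate: every reachable marking $m$ of $S$ puts exactly one token in $P_N$ and one in $P_M$, since both skeletons are state machines. The interface places then hold the messages sent by one side and not yet received by the other. The goal is to show that the receives the lagging side(s) still owe can be executed, and that afterwards the two tokens sit on $p$ and $\phi(p)$.

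For the base case $n=0$, take $k=0$. Then $m_0 = i_N + i_M$, all interface places are empty, and $\phi(\textit{init}_M) = \textit{init}_N$, so condition~3 holds with $p = \textit{init}_M$.

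For the inductive step, assume the claim for $m_n$. This gives a receive-only catch-up sequence $\rho = \rho_N\circ\rho_M$ with $\rho_N \in T_N^*$ and $\rho_M \in T_M^*$, leading from $m_n$ to an aligned marking $\hat m$ with tokens on $q$ and $\phi(q)$. Now let $m_n \xrightarrow{t_{n+1}} m_{n+1}$ and split into cases on $t_{n+1}$.
\begin{itemize}
\item If $t_{n+1}$ is a receive, it must compete with the first transition of $\rho$ on the same side, or lie on the other side. In the latter case we reorder using Lemma~\ref{lem:swapping}.
\item If the receive is a different choice from the same place $r$ than the one in $\rho$, observable choices plus determinism of labels reduce this to a race between two receives of different labels out of $r$. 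The loop property guarantees that along every path from $r$ starting with $t_{n+1}$, the label of the competing receive cannot be consumed before a send of the opposite direction. We use this to show the token in the skipped interface place stays pending only until the other side catches up along the $t_{n+1}$ branch. To be able to choose that branch on the other side, we invoke the partial mirror's fifth condition, which ensures mirrored receives exist for every send of $N$.
\item If $t_{n+1}$ is a send, say of $N$, we commute it past the pending receives of $N$ with Lemma~\ref{lem:diamond_path}. This yields a $\bar t$ with the same label fired after $\rho_N$, and past the receives of $M$ with Lemma~\ref{lem:swap_send_receive}. From the aligned marking $\hat m$ with $\bar t$ fired in $N$ from $\phi(q)$, the client $M$ has at $q$ a receive $u$ with $\phi(u) = \bar t$, by the fifth partial-mirror condition. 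Firing $u$ restores alignment. The send case for $M$ is symmetric, using the mirror correspondence of flow arcs.
\end{itemize}
At the end we reorder so that all $N$-receives come before all $M$-receives (Lemma~\ref{lem:swapping}), which gives condition~2.

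The main obstacle is the race case, where the server has already moved down a branch that the client, still catching up, might resolve differently. In particular, this occurs when a send of one party races with a pending receive at a split. The plan here is to use the diamond property to show that whichever transition is taken, a pair $u,u'$ with swapped labels reconverges to a common place. This lets us extend the catch-up sequence so that both sides end at the same place, instead of one side being stranded on a different branch. The loop property handles the same-direction races, and observable choices ensure that the branch chosen by the receiver is uniquely determined by the label of the message consumed. Making this bookkeeping precise, and checking that only receive transitions are ever appended to the catch-up sequence, is the delicate part.
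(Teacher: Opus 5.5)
\textbf{Gap: the same-side receive race.} Your base case and send case follow the paper: Lemma~\ref{lem:diamond_path}, then Lemma~\ref{lem:swap_send_receive}, then the fifth partial-mirror condition to obtain the client receive $u$ with $\phi(u)=\bar t$. The problem is the receive case where $t_{n+1}\in T_N$ is a receive and the first transition $t'_{n+1}$ of $\rho_N$ is a different receive out of the same place $r$. You try to handle this race constructively, by letting the other side ``catch up along the $t_{n+1}$ branch''. No such catch-up exists.

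By the very instance of the loop property you invoke (paths from $r$ starting with $t_{n+1}$), no receive-only continuation of $N$ after $t_{n+1}$ can fire a transition labelled $\mu(t'_{n+1})$. Receives of $M$ never consume from input places of $N$. So the token in the interface place of $t'_{n+1}$, still present in $m_{n+1}$, can never be removed by a receive-only sequence, and condition~3 would fail. Nor does the client get to choose a branch: its moves relevant to this race are sends it has already performed. The fifth partial-mirror condition only supplies client receives for server sends, so it does not apply here.

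\textbf{Missing idea: show this case cannot occur, using the induction hypothesis.}
\begin{enumerate}
\item Since $t_{n+1}$ is enabled in $m_n$, its interface place $x$ (an input place of $N$) is marked in $m_n$. By condition~3 of the hypothesis, $x$ is empty after $\rho$.
\item The only transitions consuming from $x$ are receives of $N$, all labelled $\mu(t_{n+1})$, so $\rho_N$ contains one. By observable choices it is not $t'_{n+1}$.
\item The prefix of $\rho_N$ up to the first such transition is therefore a path in $\textit{pathset}(r,t'_{n+1},\mu(t_{n+1}))$ consisting of receives only. This contradicts the loop property applied with $t'_{n+1}$ as the starting transition.
\end{enumerate}
Hence $t_{n+1}=t'_{n+1}$, and the remainder of $\rho$ is the required catch-up sequence for $m_{n+1}$.

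The same observation settles a subcase you file under reordering. If $t_{n+1}\in T_N$ is a receive but $\rho_N$ is empty, then $\rho=\rho_M$ cannot empty $x$, so this subcase is impossible. Reordering via Lemma~\ref{lem:swapping} is appropriate only when $t_{n+1}\in T_M$ and $\rho_N$ is non-empty. After the swap, the same contradiction argument for $M$ shows that $t_{n+1}$ is the first transition of $\rho_M$.
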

\begin{proof}
We use an induction on the length of the path $m_0\dots m_n$.
\begin{itemize}
\item Base case, $n = 0$. Then there is only one marking, \emph{viz.\/} $m_0$, and no transition has fired. 
Then the empty sequence of markings (taking $k = 0$), satisfies the three desired conditions vacuously.

\item Inductive step. Consider markings $m_0,\dots, m_{n+1}$ and transitions $t_1,\dots,t_{n+1}$ such that $i_S = m_0 \xrightarrow{t_1} \dots \xrightarrow{t_{n+1}} m_{n+1}$, and assume that for $i_S = m_0 \xrightarrow{t_1} \dots \xrightarrow{t_n} m_n$, there are markings $m_{n+1}',\dots, m_{n+k}'$ and transitions $t_{n+1}',\dots, t_{n+k}'$ such that $m_{n} \xrightarrow{t_{n+1}'} \dots \xrightarrow{t_{n+k}'} m_{n+k}'$ satisfying the three desired properties.
Then, by the Induction Hypothesis, the situation is as follows:

\begin{tikzpicture}[node distance=3em]
\footnotesize
\newlength{\mywidth}
\newlength{\myheight}
\node (m0) {$m_0$};
\settowidth{\mywidth}{$\dots$}
\node[right of=m0,xshift=\the\mywidth] (mdots) {$\dots$};
\settowidth{\mywidth}{$m_{n}$}
\node[right of=mdots,xshift=\the\mywidth] (mn) {$m_n$};
\settowidth{\myheight}{$t_{n+1}$}
\node[below of=mn,yshift=-\the\mywidth] (mn_p_1) {$m_{n+1}$};
\settowidth{\mywidth}{$t_{n+1}$}
\node[right of=mn,xshift=\the\mywidth] (mdots') {$\dots$};
\settowidth{\mywidth}{$t'_{n+2}$}
\node[right of=mdots',xshift=\the\mywidth] (mn_p_l') {$m'_{n+l}$};
\settowidth{\mywidth}{$t'_{n+l+1}$}
\node[right of=mn_p_l',xshift=\the\mywidth] (mdots_') {$\dots$};
\settowidth{\mywidth}{$t'_{n+k}$}
\node[right of=mdots_',xshift=\the\mywidth] (mk') {$m_{n+k}'$};

\draw[->] 
  (m0) edge node[above] {$t_1$} (mdots)
  (mdots) edge node[above] {$t_n$} (mn)
  (mn) edge node[right] {$t_{n+1}$} (mn_p_1)
  (mn) edge node[above] {$t_{n+1}'$} (mdots')
  (mdots') edge node[above] {$t'_{n+l}$} (mn_p_l')
  (mn_p_l') edge node[above] {$t'_{n+l+1}$} (mdots_')
  (mdots_') edge node[above] {$t_{n+k}'$} (mk')
;
\end{tikzpicture}

With $\lambda(t_{n+1}') = \dots = \lambda(t_{n+k}') = receive$, and $m'_{n+k}(p) = m'_{n+k}(\phi({p})) = 1$ for some (non-interface) place $p \in P_M$ and 
for all $p \notin P_N \cup P_M$, $m'_{n+k}(p) = 0$. 
Moreover, we have $t'_{n+1}, \dots, t'_{n+l} \in T_N$ and $t'_{n+l+1}, \dots, t'_{n+k} \in T_M$.\\

We distinguish two cases, based on the communication direction of $t_{n+1}$.
\begin{itemize}
\item Case $\lambda(t_{n+1}) = send$. 
We make a further case distinction based on the portnets firing $t_{n+1}$ and $t_{n+1}'$.

\begin{itemize}
\item both $t_{n+1}$ and $t_{n+1}'$ belong to $T_N$.
By Lemma~\ref{lem:diamond_path}, there is a path $m_{n+1} \xrightarrow{t_{n+1}''} \dots \xrightarrow{t_{n+l}''} m_{n+l+1}$ such that for all $1 \leq i \leq l$, $\mu(t_{n+i}'') = \mu(t_{n+i}')$ and $\lambda(t_{n+i}'') = \lambda(t_{n+i}'')$, and $m'_{n+l} \xrightarrow{\tilde{t}_{n+1}} m_{n+l+1}$ for some $\tilde{t}_{n+1}$ satisfying $\mu(\tilde{t}_{n+1}) = \mu(t_{n+1})$ and $\lambda(\tilde{t}_{n+1})= send$:

\begin{tikzpicture}[node distance=3em]
\footnotesize
\node (m0) {$m_0$};
\settowidth{\mywidth}{$\dots$}
\node[right of=m0,xshift=\the\mywidth] (mdots) {$\dots$};
\settowidth{\mywidth}{$m_{n}$}
\node[right of=mdots,xshift=\the\mywidth] (mn) {$m_n$};
+\settowidth{\myheight}{$t_{n+1}$}
\node[below of=mn,yshift=-\the\mywidth] (mn_p_1) {$m_{n+1}$};
\settowidth{\mywidth}{$t_{n+1}$}
\node[right of=mn,xshift=\the\mywidth] (mdots') {$\dots$};
\settowidth{\mywidth}{$t'_{n+2}$}
\node[right of=mdots',xshift=\the\mywidth] (mn_p_l') {$m'_{n+l}$};
\settowidth{\mywidth}{$t'_{n+l+1}$}
\node[right of=mn_p_l',xshift=\the\mywidth] (mdots_') {$\dots$};
\settowidth{\mywidth}{$t'_{n+k}$}
\node[right of=mdots_',xshift=\the\mywidth] (mk') {$m_{n+k}'$};
\settowidth{\mywidth}{$t_{n+1}$}
\node[right of=mn_p_1,xshift=\the\mywidth] (mn_p_2) {$\dots$};
\settowidth{\mywidth}{$t'_{n+2}$}
\node[right of=mn_p_2,xshift=\the\mywidth] (mn_p_l_p_1) {$m_{n+l+1}$};

\draw[->] 
  (m0) edge node[above] {$t_1$} (mdots)
  (mdots) edge node[above] {$t_n$} (mn)
  (mn) edge node[right] {$t_{n+1}$} (mn_p_1)
  (mn) edge node[above] {$t_{n+1}'$} (mdots')
  (mdots') edge node[above] {$t'_{n+l}$} (mn_p_l')
  (mn_p_l') edge node[above] {$t'_{n+l+1}$} (mdots_')
  (mdots_') edge node[above] {$t_{n+k}'$} (mk')
  (mn_p_1) edge node[above] {$t_{n+1}''$} (mn_p_2)
  (mn_p_2) edge node[above] {$t''_{n+l}$} (mn_p_l_p_1)
  (mn_p_l') edge[dotted] node[right] {$\tilde{t}_{n+1}$} (mn_p_l_p_1)
;
\end{tikzpicture}

By lemma~\ref{lem:swap_send_receive}, it then also follows that we can extend the execution:

\begin{tikzpicture}[node distance=3em]
\footnotesize
\node (m0) {$m_0$};
\settowidth{\mywidth}{$\dots$}
\node[right of=m0,xshift=\the\mywidth] (mdots) {$\dots$};
\settowidth{\mywidth}{$m_{n}$}
\node[right of=mdots,xshift=\the\mywidth] (mn) {$m_n$};
\settowidth{\myheight}{$t_{n+1}$}
\node[below of=mn,yshift=-\the\mywidth] (mn_p_1) {$m_{n+1}$};
\settowidth{\mywidth}{$t_{n+1}$}
\node[right of=mn,xshift=\the\mywidth] (mdots') {$\dots$};
\settowidth{\mywidth}{$t'_{n+2}$}
\node[right of=mdots',xshift=\the\mywidth] (mn_p_l') {$m'_{n+l}$};
\settowidth{\mywidth}{$t'_{n+l+1}$}
\node[right of=mn_p_l',xshift=\the\mywidth] (mdots_') {$\dots$};
\settowidth{\mywidth}{$t'_{n+k}$}
\node[right of=mdots_',xshift=\the\mywidth] (mk') {$m_{n+k}'$};
\settowidth{\mywidth}{$t_{n+1}$}
\node[right of=mn_p_1,xshift=\the\mywidth] (mn_p_2) {$\dots$};
\settowidth{\mywidth}{$t'_{n+2}$}
\node[right of=mn_p_2,xshift=\the\mywidth] (mn_p_l_p_1) {$m_{n+l+1}$};
\settowidth{\mywidth}{$t'_{n+l+1}$}
\node[right of=mn_p_l_p_1,xshift=\the\mywidth] (_mdots') {$\dots$};
\settowidth{\mywidth}{$t'_{n+k}$}
\node[right of=_mdots',xshift=\the\mywidth] (mk_p_1) {$m_{n+k+1}$};

\draw[->] 
  (m0) edge node[above] {$t_1$} (mdots)
  (mdots) edge node[above] {$t_n$} (mn)
  (mn) edge node[right] {$t_{n+1}$} (mn_p_1)
  (mn) edge node[above] {$t_{n+1}'$} (mdots')
  (mdots') edge node[above] {$t'_{n+l}$} (mn_p_l')
  (mn_p_l') edge node[above] {$t'_{n+l+1}$} (mdots_')
  (mdots_') edge node[above] {$t_{n+k}'$} (mk')
  (mn_p_1) edge node[above] {$t_{n+1}''$} (mn_p_2)
  (mn_p_2) edge node[above] {$t''_{n+l}$} (mn_p_l_p_1)
  (mn_p_l') edge[dotted] node[right] {$\tilde{t}_{n+1}$} (mn_p_l_p_1)
  (mn_p_l_p_1) edge node[above] {$t'_{n+l+1}$} (_mdots')
  (_mdots') edge node[above] {$t'_{n+k}$} (mk_p_1)
  (mk') edge[dotted] node[right] {$\tilde{t}_{n+1}$} (mk_p_1)
;
\end{tikzpicture}

It now suffices to argue that there is a transition $t'_{n+k+1}$ such that $\mu(t'_{n+k+1}) = \mu(t_{n+1})$ and $\lambda(t'_{n+k+1}) = receive$ that is enabled. 
By the induction hypothesis, we find that there must be some place $p \in P_M$ and mirror $\phi({p}) \in P_N$ such that $m'_{n+k}(p) = m'_{n+k}(\phi({p})) = 1$. 
Let $p \in P_M$ and $\bar{p} = \phi(p) \in P_N$ be such.
Note that in the top execution sequence, we find that $N$ no longer executes transitions in marking $m'_{n+l}$, so $m'_{n+l}(q) = m'_{n+k}(q)$ for all $q \in P_N$.

Since $m'_{n+l}(\bar{p}) = 1$, $N$ is a S-OWN, and $m'_{n+l} \xrightarrow{\tilde{t}_{n+1}}$, we must have $\bar{p} \in \pre{\tilde{t}_{n+1}}$. 
Let $p' \in \post{\tilde{t}_{n+1}} \cap P_N$, and let $\tilde{\tilde{t}}_{n+1} \in T_M$ be such that $\phi(\tilde{\tilde{t}}_{n+1}) = \tilde{t}_{n+1}$. Note that this $\tilde{\tilde{t}}_{n+1} \in T_M$ exists because $\lambda(\tilde{t}_{n+1}) = \textit{send}$, $(\phi(p),\tilde{t}_{n+1}) \in F_N$ and $M$ is a partial mirror (by Def.~\ref{def:partial_mirror}). Then $\lambda(\tilde{\tilde{t}}_{n+1}) = \textit{receive}$. 
Since 
$\bar{p} \in \pre{\tilde{t}_{n+1}}$ and $\phi(\tilde{\tilde{t}}_{n+1}) = \tilde{t}_{n+1}$, we have $p \in \pre{\tilde{\tilde{t}}_{n+1}}$
and 
$\pre{\tilde{\tilde{t}}_{n+1}} \subseteq \post{\tilde{t}_{n+1}} \cup \{p\}$, we may execute $m_{n+k+1} \xrightarrow{\tilde{\tilde{t}}_{n+1}} m_{n+k+2}$ such that $m_{n+k+2}(p') = m_{n+k+2}(\bar{p}') = 1$, for $\bar{p}' \in \post{\tilde{\tilde{t}}_{n+1}} \cap P_M$ (Note $\bar{p}'$ exists since skeleton of a mirror is a workflow net - Def.~\ref{def:partial_mirror}) such that $\phi(\bar{p}') = p'$, and 
$m_{n+k+2}(i) = 0$ for all $i \notin \{p',\bar{p}'\}$.

\item $t_{n+1}$ belongs to $T_N$, but $t_{n+1}'$ belongs to $T_M$.
This case is a simplification of the previous case, since by the induction hypothesis we then have $t_{n+1}', \dots, t_{n+k}' \in T_M$.

\item $t_{n+1}$ belongs to $T_M$, but $t_{n+1}'$ belongs to $T_N$.
By Lemma~\ref{lem:swapping}, we find that we can transform:

\begin{tikzpicture}[node distance=3em]
\footnotesize
\node (m0) {$m_0$};
\settowidth{\mywidth}{$\dots$}
\node[right of=m0,xshift=\the\mywidth] (mdots) {$\dots$};
\settowidth{\mywidth}{$m_{n}$}
\node[right of=mdots,xshift=\the\mywidth] (mn) {$m_n$};
\settowidth{\myheight}{$t_{n+1}$}
\node[below of=mn,yshift=-\the\mywidth] (mn_p_1) {$m_{n+1}$};
\settowidth{\mywidth}{$t_{n+1}$}
\node[right of=mn,xshift=\the\mywidth] (mdots') {$\dots$};
\settowidth{\mywidth}{$t'_{n+2}$}
\node[right of=mdots',xshift=\the\mywidth] (mn_p_l') {$m'_{n+l}$};
\settowidth{\mywidth}{$t'_{n+l+1}$}
\node[right of=mn_p_l',xshift=\the\mywidth] (mdots_') {$\dots$};
\settowidth{\mywidth}{$t'_{n+k}$}
\node[right of=mdots_',xshift=\the\mywidth] (mk') {$m_{n+k}'$};

\draw[->] 
  (m0) edge node[above] {$t_1$} (mdots)
  (mdots) edge node[above] {$t_n$} (mn)
  (mn) edge node[right] {$t_{n+1}$} (mn_p_1)
  (mn) edge node[above] {$t_{n+1}'$} (mdots')
  (mdots') edge node[above] {$t'_{n+l}$} (mn_p_l')
  (mn_p_l') edge node[above] {$t'_{n+l+1}$} (mdots_')
  (mdots_') edge node[above] {$t_{n+k}'$} (mk')
;
\end{tikzpicture}

to:

\begin{tikzpicture}[node distance=3em]
\footnotesize
\node (m0) {$m_0$};
\settowidth{\mywidth}{$\dots$}
\node[right of=m0,xshift=\the\mywidth] (mdots) {$\dots$};
\settowidth{\mywidth}{$m_{n}$}
\node[right of=mdots,xshift=\the\mywidth] (mn) {$m_n$};
\settowidth{\myheight}{$t_{n+1}$}
\node[below of=mn,yshift=-\the\mywidth] (mn_p_1) {$m_{n+1}$};
\settowidth{\mywidth}{$t_{n+1}$}
\node[right of=mn,xshift=\the\mywidth] (mdots') {$\dots$};
\settowidth{\mywidth}{$t'_{n+2}$}
\node[right of=mdots',xshift=\the\mywidth] (mn_p_l') {$m'_{n+l}$};
\settowidth{\mywidth}{$t'_{n+l+1}$}
\node[right of=mn_p_l',xshift=\the\mywidth] (mdots_') {$\dots$};
\settowidth{\mywidth}{$t'_{n+k}$}
\node[right of=mdots_',xshift=\the\mywidth] (mk') {$m_{n+k}'$};

\draw[->] 
  (m0) edge node[above] {$t_1$} (mdots)
  (mdots) edge node[above] {$t_n$} (mn)
  (mn) edge node[right] {$t_{n+1}$} (mn_p_1)
  (mn) edge node[above] {$t_{n+l+1}'$} (mdots')
  (mdots') edge node[above] {$t'_{n+k}$} (mn_p_l')
  (mn_p_l') edge node[above] {$t'_{n+1}$} (mdots_')
  (mdots_') edge node[above] {$t_{n+l}'$} (mk')
;
\end{tikzpicture}

But then the argument boils down to the first case again.
The sequence of transitions that is then obtained can then again be rearranged using Lemma~\ref{lem:swapping}, yielding a sequence of transitions in $T_N^*T_M^*$.

\item $t_{n+1}$ and $t_{n+1}'$ both belong to $T_M$.
This case is again a simplification of the previous case.

\end{itemize}

\item Case $\lambda(t_{n+1}) = receive$. We again consider cases based on the portnet firing $t_{n+1}$ and $t_{n+1}'$.

\begin{itemize}

\item $t_{n+1}$ and $t_{n+1}'$ both belong to $T_N$.
We show that $t_{n+1} = t'_{n+1}$, from which we immediately reach the desired conclusion.
Towards a contradiction, assume that $t_{n+1} \neq t'_{n+1}$.
Since both $t_{n+1}$ and $t_{n+1}'$ are enabled, and $N$ and $M$ are S-Nets, we must have $\pre{t_{n+1}} \cap \pre{t_{n+1}'} \neq \emptyset$.
Moreover, the interface places connected to both $t_{n+1}$ and $t_{n+1}'$ must contain at least one token.
By the induction hypothesis, we know that in marking $m'_{n+k}$, there is no interface place containing a token.
Consequently, at least one of the transitions $t \in \{ t'_{n+2},\dots,t'_{n+k}\}$ must be such that $\mu(t) = \mu(t_{n+1})$.
But by the loop property, this cannot be the case, as a transition $t'$, with $\lambda(t') = send$ must fire before firing any transition with label $\mu(t_{n+1})$.
Contradiction, so $t_{n+1} = t'_{n+1}$. 

\item $t_{n+1}$ belongs to $T_N$, but $t'_{n+1}$ belongs to $T_M$.
This cannot be the case, since this contradicts the assumption that the sequence of transitions $t_{n+1}' \dots t'_{n+k}$, which only involves $M$ firing, empties all interface places.
However, receive transitions of $M$ cannot empty the interface place connected to $t_{n+1}$. 

\item $t_{n+1}$ belongs to $T_M$, but $t'_{n+1}$ belongs to $T_N$.
Using Lemma~\ref{lem:swapping}, we can swap the sequence of transitions $t'_{n+1} \dots t'_{n+l}$ and $t'_{n+l+1} \dots t'_{n+k}$.
Following the line of reasoning of the first case, we find that $t'_{n+l+1} = t_{n+1}$.
Another application of Lemma~\ref{lem:swapping} then yields the desired conclusion.

\item $t_{n+1}$ and $t_{n+1}'$ both belongs to $T_M$.
This case is a simplification of the previous case.
\qed
\end{itemize}
\end{itemize}
\end{itemize}
\end{proof}
\newcommand{\sendreceive}{\mathop{\langle\rangle}}
From a reachable marking in a composition of a well-formed server portnet and its well-formed partial mirror client such that a place in the client and its mirror are marked, and all other places empty, an alternating sequence of send and mirrored receive transitions can be fired leading to the final marking. The proof relies on weak termination of portnet skeletons (Corollary~\ref{cor:wt_skeleton}).
\begin{proposition}\label{lemma:homing_sequence}
Let $N$ and $M$ be well-formed labeled portnets and let $(M,\phi)$ be a partial mirror of $N$. 
Let $S = \textit{compose}(\{N,M\})$ and $i_S = i_N + i_M$ be the initial marking of $S$ and $f_S = f_N + f_M$ be the final marking of $S$.
Assume the skeleton of $M$ is weakly terminating. Let $m \in \mathcal{R}(S,i_S,f_S)$ such that for some $p \in P_M$, $m(p) = m(\phi(p)) = 1$ and $m(q) = 0$ for all $q \notin \{p,\phi(p)\}$. 
Then marking $f_S$ is reachable from marking $m$.
\end{proposition}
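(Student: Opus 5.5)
The plan is to exploit that the marking $m$ is a ``synchronised'' marking: the client sits in $p$, the server sits in $\phi(p)$, and no messages are in transit. From such a marking we want a finite sequence of \emph{macro-steps}, each consisting of one send transition followed immediately by its mirrored receive transition. Every macro-step should again produce a synchronised marking, and the sequence should end with both $\textit{fin}_M$ and $\textit{fin}_N$ marked.

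\textbf{Step 1 (a path in the client skeleton).} By assumption, the skeleton of $M$ is weakly terminating. Since $M$ is a state machine workflow net, there is a path $p = p_0, u_1, p_1, \dots, u_r, p_r = \textit{fin}_M$ in $\mathcal{S}(M)$, and we may take it to be simple (no repeated places). We proceed by induction on $r$. If $r = 0$, then $p = \textit{fin}_M$. Because $\phi(\textit{fin}_M) = \textit{fin}_N$ by Def.~\ref{def:partial_mirror} and $\phi$ is injective, the server is also in its final place, so $m = f_S$ and we are done.

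\textbf{Step 2 (one macro-step).} Consider the first transition $u_1$ of the path. By the second condition on labeled portnets, $u_1$ is connected to exactly one interface place $x$, so $\lambda(u_1) \neq \tau$.

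\emph{Case $\lambda(u_1) = \textit{send}$.} Then $u_1$ is enabled at $m$, since $p$ is marked and the preset of $u_1$ contains no interface place. Firing it puts a token in $x \in O_M$. By the partial mirror conditions, $\phi(u_1) \in T_N$ satisfies $(\phi(p), \phi(u_1)) \in F_N$, $(\phi(u_1), \phi(p_1)) \in F_N$ and $x \in \pre{\phi(u_1)}$. Hence $\phi(u_1)$ is a receive transition, enabled after $u_1$ fires. Firing it yields the marking with $p_1$ and $\phi(p_1)$ marked and all interface places empty.

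\emph{Case $\lambda(u_1) = \textit{receive}$.} The mirror image $\phi(u_1)$ is a send transition of $N$ with $\phi(p) \in \pre{\phi(u_1)}$. It is enabled at $m$, so we fire it and then fire $u_1$, which consumes the token just produced on $x$. Again we reach the synchronised marking with $p_1$ and $\phi(p_1)$ marked.

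Both resulting markings are reachable and satisfy the hypothesis of the proposition with $p_1$ in place of $p$. The remaining path $p_1, u_2, \dots, p_r$ is strictly shorter, so the induction closes and $f_S$ is reachable from $m$.

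\textbf{Main obstacle.} The delicate point is verifying that the transitions connected through $\phi$ share the \emph{same} interface place. That is, we must show that $x$ being the output of $u_1$ implies that $x$ is the input of $\phi(u_1)$, and that the node identification used in the composition is consistent with $\phi$.

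The partial mirror definition states $\phi(O_M) \subseteq I_N$ together with arc reversal. However, the composition fuses places by \emph{identity}, not via $\phi$. So we also need $\phi(x) = x$ on interface places, or at least that the label condition forces the fused place to be the one adjacent to $\phi(u_1)$. For this I would use $\mu_M(u_1) = \mu_N(\phi(u_1))$ together with the fourth labeled portnet condition (equal labels imply the same interface place), and read the composition as fusing $x$ with $\phi(x)$.

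A secondary check concerns the markings outside $P_N \cup P_M \cup \{x\}$. Since $N$ and $M$ are state machines, each macro-step moves exactly one token in each portnet and transiently one token on $x$. Hence the invariant ``exactly $p_i$ and $\phi(p_i)$ marked, nothing else'' is preserved.
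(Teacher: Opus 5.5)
Your proof is correct and is essentially the paper's argument. Both induct on the length of a path from $p$ to $\textit{fin}_M$ in the skeleton of $M$ and discharge each step by a send followed by its $\phi$-mirrored receive, which preserves the synchronised shape of the marking; the paper takes a shortest sequence and dismisses the receive case as dual, which you spell out. The interface-place identification you flag is also left implicit in the paper, which simply asserts $\pre{\phi(u_1)} = \{\phi(p), i\}$. Your label-based justification cannot establish it on its own, because the fourth portnet condition only relates transitions within a single net, so it must be taken as the convention that $\phi$ is the identity on interface places.
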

\begin{proof} We first introduce shorthand notation for ordering two transitions.
Let $t, t'$ be two transitions. 
We write $t \sendreceive t'$ to mean $tt'$ if $\lambda(t) = send$, and $t't$ otherwise.
We next focus on proving the following property:\medskip

\noindent {\em $\forall k \in \mathbb{N}$ and $\forall m \in \mathcal{R}(S,i_S)$ such that for some $p \in P_M$, $m(p) = m(\phi(p)) = 1$ and $m(q) = 0$ for all $q \notin \{p,\phi(p)\}$, 
if $t_1\dots t_k$ is the shortest firing sequence to move the token from $p$ to $\textit{fin}_M$ in the skeleton of $M$ is of length $k$, 
then firing $(t_1 \sendreceive \phi({t}_1)) \dots (t_k \sendreceive \phi({t}_k))$ in $S$ from marking $m$ moves the tokens in $p$ and $\phi(p)$ to $\textit{fin}_M$ and $\textit{fin}_N$, which is the marking $f_S$.}\medskip

\noindent We use induction on $k$ to prove this property.
\begin{itemize}
    \item Base case.
    Suppose a reachable marking $m \in \mathcal{R}(S,i_S)$ is such that for some $p \in P_M$, $m(p) = m(\phi(p)) = 1$ and $m(q) = 0$ for all $q \notin \{p,\phi(p)\}$.
    Suppose that $t_1 \dots t_k$, the shortest firing sequence to move the token from place $p$ to $\textit{fin}_M$ in the skeleton of $M$, is of length $k = 0$.
    From this, it follows that $p = \textit{fin}_M$ and $\phi(p) = \textit{fin}_N$.
    Since there are no other tokens in $S$, we find that $m = f_S$, so our statement holds.
    
    \item Inductive case.
    As our induction hypothesis, we assume that for all reachable markings $m' \in \mathcal{R}(S,i_S)$ for which for some $p' \in P_M$, $m'(p') = m'(\phi(p')) = 1$ and $m'(q') = 0$ for all $q' \notin \{p',\phi(p')\}$, if  $t_1 \dots t_k$, the shortest firing sequence to move the token from place $p'$ to $\textit{fin}_M$ in the skeleton of $M$, is of length $k$, then firing $(t_1 \sendreceive \phi({t_1})) \dots (t_k \sendreceive \phi({t_k}))$ in $S$ moves the tokens in $p'$ and $\phi(p')$ to $\textit{fin}_N$ and $\textit{fin}_M$.
    
    Next, assume that $m \in \mathcal{R}(S,i_S)$ is a reachable marking for which for some $p \in P_M$, $m(p)= m(\phi(p)) = 1$ and $m(q) = 0$ for all $q \notin \{p,\phi(p)\}$.
    Suppose that $u_1 \dots u_{k+1}$, the shortest firing sequence to move the token from place $p$ to $\textit{fin}_M$ in the skeleton of $M$, is of length $k+1$.\\
    
    We distinguish two cases, based on the direction of $u_1$:
    \begin{itemize}
        \item $\lambda(u_1) = send$. 
        First observe that $u_1$ is enabled in $m$, too.
        Firing $u_1$ in $S$ produces a token in some unique place $p' \in P_M \cap \post{u_1}$.
        Since $\lambda(u_1) = send$, after firing $u_1$, there is also a token in the unique interface place $i \in \post{u_1}\setminus P_M$.
        We thus reach a marking $m'$ such that $m'(p') = m'(\phi(p)) = m'(i) = 1$ and $m'(q) = 0$ for all $q \notin \{p',\phi(p),i\}$. 
        
        Since $\phi(p)$ is mirror of $p$, we find that $\exists \bar{u}_1 \in T_N: \bar{u}_1 = \phi(u_1)$ is enabled, since $\{\phi(p),i\} = \pre{\bar{u}_1}$.
        Firing $\bar{u}_1$ consumes the tokens in $\phi(p)$ and $i$, producing a token in $\phi(p')$. 
        We therefore reach a marking $m''$ such that $m''(p') = m''(\phi(p')) = 1$ and $m''(q) = 0$ for all $q \notin \{p',\phi(p')\}$.
        Since the shortest firing sequence $u_2 \dots u_{k+1}$ from $p'$ to $\textit{fin}_N$ is of length $k$, we find, by induction, that firing $(u_2 \sendreceive \phi({u}_2)) \dots (u_{k+1} \sendreceive \phi({u}_{k+1}))$ in $S$ moves the token in $p'$ and $\phi(p')$ to $\textit{fin}_M$ and $\textit{fin}_N$.
        But then firing $(u_1 \sendreceive \phi({u}_1)) \dots (u_{k+1} \sendreceive \phi({u}_{k+1}))$ moves the token in $p$ and $\phi(p)$ to $\textit{fin}_M$ and $\textit{fin}_N$, i.e., the final marking.
        \item The argument for case $\lambda(u_1) = receive$ is fully dual.
    \end{itemize}
\end{itemize}
Consequently, we conclude our lemma follows from the statement we proved.
\qed
\end{proof}
We are now in a position to formally prove Theorem~\ref{thm:wt_portnets}:
\begin{proof}[Theorem~\ref{thm:wt_portnets}]
Let $N$, $M$ and $S$ be as stated.
Let $m \in \mathcal{R}(S, i_S, f_S)$.

By Proposition~\ref{lemma:synchronisable}, there must be a sequence of transitions $t_1,\dots,t_n$ such that $m \xrightarrow{t_1}\dots\xrightarrow{t_n} m'$ such that $\lambda(t_i) = receive$ for all $i \in 1 \ldots n$, and there is a place $p \in P_M$ such that $m'(p) = m'(\phi(p)) = 1$ and all other places of $S$ are empty.
By Proposition~\ref{lemma:homing_sequence} it follows that $f_S$ is reachable. 
Hence, $S$ is weakly terminating.
\qed
\end{proof}
The previous result is strengthened by showing that if the final places of a composition of well-formed labeled portnet and its well-formed partial mirror are marked, then no other places of the composition are marked.

\begin{theorem}[\textbf{Proper Completion}]\label{thm:proper_completion_portnets}
Let $N$ be a well-formed labeled portnet and let $(M,\phi)$ be a partial mirror of $N$. 
Let $S = \textit{compose}(\{N,M\})$ and $i_S = i_N + i_M$ be the initial marking of $S$ and $f_S = f_N + f_M$ be the final marking of $S$. 
For all markings reachable in the net system $(S, i_S, f_S)$, if the final places of $N$ and $M$ are marked then all other places are empty. 
\end{theorem}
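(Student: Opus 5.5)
The plan is to combine a simple invariant of the two skeletons with a token-counting argument on the interface places.

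\emph{Step 1: each skeleton carries exactly one token.} Every transition of $N$ and of $M$ is a transition of an S-OWN, so it has exactly one input and one output place in $P_N$ (resp.\ $P_M$). The interface places may additionally be consumed from or produced into, but they do not change the count on internal places. Hence, for every reachable marking $m$, we have $\sum_{q \in P_N} m(q) = 1$ and $\sum_{q\in P_M} m(q) = 1$, as these sums hold initially ($i_S = i_N + i_M$). Consequently, if $\textit{fin}_N$ and $\textit{fin}_M$ are both marked, then all other places of $P_N \cup P_M$ are empty. It remains to show that all interface places (the shared places that became internal in $S$) are empty.

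\emph{Step 2: relate interface tokens to label counts.} By the labeled portnet conditions, all transitions attached to one interface place $x$ carry the same label $l_x$, and distinct places have distinct labels. Fix a firing sequence $\sigma$ from $i_S$ to $m$ and write $\sigma_N$, $\sigma_M$ for its projections on $T_N$ and $T_M$. Then $m(x)$ equals the number of $l_x$-labeled transitions in the sending component minus the number in the receiving component. So it suffices to show that, when both final places are marked, $\mu(\sigma_N)$ and $\mu(\sigma_M)$ contain each label equally often.

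\emph{Step 3: the main obstacle.} We need to show that the sequences reaching the final places in $N$ and $M$ execute matching label multisets. I would try to get this from Theorem~\ref{thm:wt_portnets} together with Proposition~\ref{lemma:synchronisable}. From $m$, Proposition~\ref{lemma:synchronisable} gives a receive-only extension reaching a marking $m'$ in which some $p$ and $\phi(p)$ are marked and all interface places are empty.

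Since $\post{\textit{fin}_N} = \post{\textit{fin}_M} = \emptyset$, no transition of either skeleton can fire from a marking with both final places marked. In particular no receive transition is enabled, so $k = 0$ and $m' = m$. Hence every interface place is empty in $m$, and with Step 1 all places other than $\textit{fin}_N, \textit{fin}_M$ are empty.

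The delicate point is that Proposition~\ref{lemma:synchronisable} is stated for well-formed $M$, whereas this theorem only assumes $(M,\phi)$ is a partial mirror. If that hypothesis is really needed, I would either add well-formedness of $M$ as an assumption or re-derive the emptiness directly. For the direct route, I would argue via Step 2 by induction along $\sigma$, showing that the label sequences of $N$ and $M$ stay consistent under $\phi$. This would use the observable-choices property of $N$ to determine the mirrored place uniquely from labels, and the loop and diamond properties to rule out surplus messages that could never be consumed.
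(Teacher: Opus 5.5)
Your argument is essentially the paper's. The one-token invariant of the S-net skeletons handles the internal places. The weak-termination machinery, combined with $\post{\textit{fin}_N}=\post{\textit{fin}_M}=\emptyset$ (so nothing can fire from $m$), then forces the interface places to be empty already in $m$. You cite Proposition~\ref{lemma:synchronisable} directly, whereas the paper cites Theorem~\ref{thm:wt_portnets}, which is built on it; your Step~2 is not needed.

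Your caveat is justified and applies equally to the paper's proof, whose appeal to Theorem~\ref{thm:wt_portnets} tacitly requires $M$ to be well-formed. Rather than your sketched direct induction, there is probably a cheaper way to avoid that hypothesis, which I have only sketched and not checked in detail. Via $\phi$, $\textit{compose}(\{N,M\})$ embeds into the composition of $N$ with its full mirror, and that full mirror inherits well-formedness from $N$. Every reachable marking of the former is then a reachable marking of the latter, so proper completion transfers.
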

\begin{proof}
Suppose that the final places of $N$ and $M$ are marked and there are left over tokens in other places of net $S$.  
There cannot be any left over tokens in the internal places of $N$ and $M$, since they are state machine nets having a single token (initial place) each in their initial marking. 
If there are left over tokens in interface places, then by the Thm.~\ref{thm:wt_portnets}, there must exist a firing sequence that leads to the final marking $f_S$. This is not possible since the final places of $N$ and $M$ have empty postsets, which is a contradiction. 
\qed
\end{proof}

\section{Weak Termination of Composition of Multiple Portnets}
In practice, multiple clients may wish to consume the same service provided by a server, repeatedly. 
The results we established in the previous section do not generalize to such a setting unchanged. 
To see this, consider the well-formed labeled portnets $X, Y$ and $Z$ depicted in Fig.~\ref{fig:deadlockWithMultipleClients}. 
Observe that $(Y,\phi_1)$ and $(Z,\phi_2)$ are partial mirrors of $X$.
While the compositions $\textit{compose}(\{X,Y\})$ and $\textit{compose}(\{X,Z\})$ are deadlock free, the firing sequence $\langle 8, 10, 16, 2, 4, 6, 18, 19 \rangle$ in the composition $\textit{compose}(\{X, Y, Z\})$ leads to a deadlock. 
The main reason is that clients may interfere with each other. 
It turns out that in order to achieve the desired result we must embed our results in a synchronization pattern. 

\begin{figure}[t]
    \centering
    \includegraphics[scale=0.59]{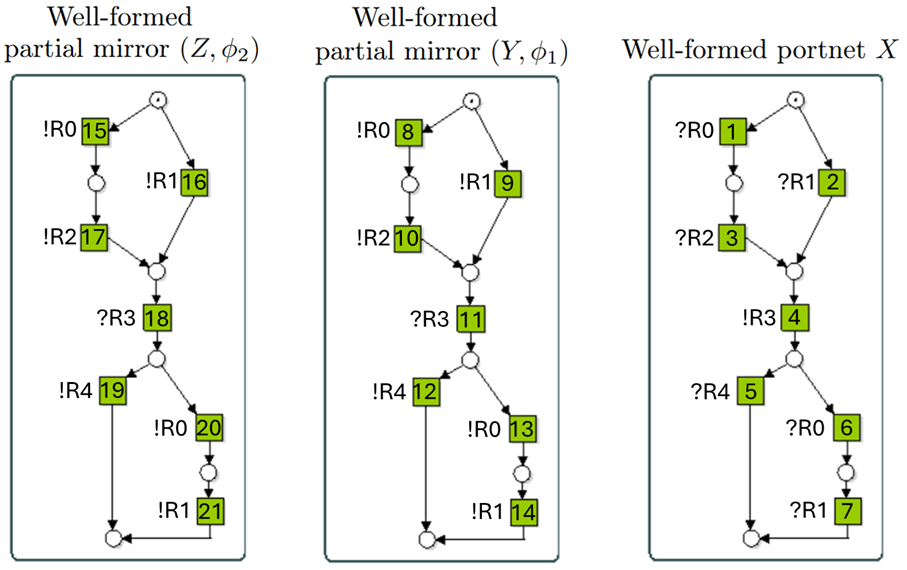}
    \caption{Deadlock in composition: $\textit{compose}(\{X,Y,Z\})$}
    \label{fig:deadlockWithMultipleClients}
\end{figure}
A synchronization pattern 
requires that a server first makes a selection among the clients with which to cooperate. 
Such negotiation can be modeled by means of the pattern shown in Fig.~\ref{fig:synchPattern}. 
Starting from the initial marking, one or more clients may send their requests, one of which is picked up by the server, followed by a response back. 
The response is picked up by exactly one client, which guarantees that exactly one client may progress with the server to eventually mark their respective final places without interference from others. 
Once the final place of the server is marked a closure transition $t$ returns the token back to the initial place, and from thereon the server can pick the next client.
This process repeats until all clients have been handled and the final marking is reached. 
\begin{figure}[htb]
    \centering
    \includegraphics[scale=0.37]{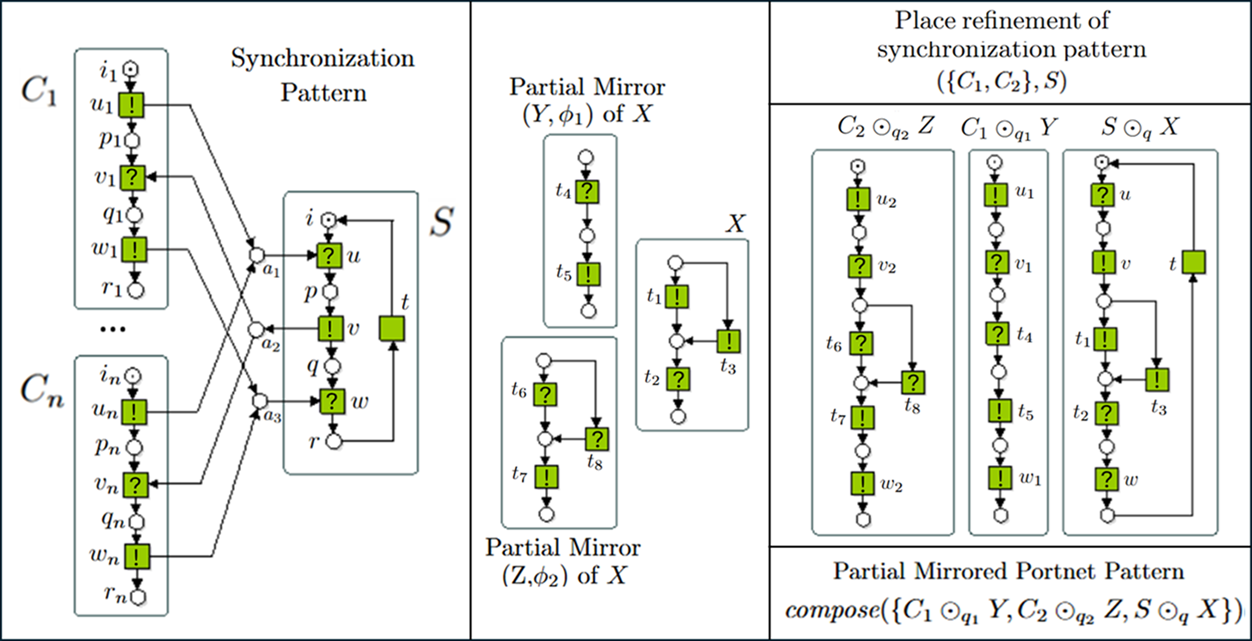}
    \caption{Synchronization Pattern and Partial Mirrored Portnet Pattern}
    \label{fig:synchPattern}
\end{figure}
\begin{definition}[Synchronization Pattern]\label{def:synchronization_pattern}
The pattern depicted in Fig.~\ref{fig:synchPattern} (left) is a \emph{synchronization pattern} $(\mathcal{C}, S, Q)$ 
where $S$ is a closure of a well-formed labeled portnet $N$, $\mathcal{C} = \{C_1, C_2,  \ldots C_n\}$ is a set of $n$ ($n \in \mathbb{N}$) well-formed mirrored clients of $N$, and $Q$ is the set of refinable places $\{q, q_1, q_2, \ldots q_n\}$.
The net system is defined as $(\hbox{compose}(\mathcal{C} \cup \{S\}), m_0, m_f)$, where $m_0$ is the initial marking given by $\{i\} \cup \{ i_j \mid j \leq n\}$, 
and final marking $m_f$ given by 
$\{i\} \cup \{r_j \mid j \leq n\}$. 
\end{definition}
\begin{lemma}
\label{thm:weakly_terminating_sync_pattern}
The synchronization pattern is weakly terminating.
\end{lemma}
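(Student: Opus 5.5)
We prove the lemma by an invariant-based argument: every reachable marking of the synchronization pattern lies in one of a few "phases", and in each phase there is a firing sequence back to a marking of the same shape with strictly fewer unserved clients. Since the figure is not reproduced formally, we read off the structure of Fig.~\ref{fig:synchPattern}. Each client $C_j$ starts in $i_j$, sends a request into a shared request place, waits in $q_j$ for the response, then behaves as a mirrored client of $N$ and ends in $r_j$. The server $S$ starts in $i$, consumes a request, moves to $q$, and sends a response that exactly one waiting client can receive. It then runs $N$ to $\textit{fin}_N$, and the closure transition returns the token to $i$.

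\textbf{Step 1: invariants.} Define $W$ as the number of clients whose token has left $i_j$ but has not yet received a response, and $A$ as the number of clients "in session". We would show, by induction on firing sequences, that the following hold in every reachable marking.
\begin{enumerate}
\item Each client has exactly one token in its skeleton, since each client skeleton is a state machine.
\item The server has exactly one token in its skeleton.
\item The number of tokens in the request place plus the number of requests already consumed, but not yet answered by the server, equals $W$.
\item At most one response token or active session exists, and it exists only when the server token is strictly between $q$ and the closure transition.
\end{enumerate}

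\textbf{Step 2: reduction to Theorem~\ref{thm:wt_portnets}.} Consider a marking in which the server has handed a response to client $C_j$, or the response is pending in the response place. First let $C_j$ receive the response, which is possible because $C_j$ is the client that sent a request and is waiting in $q_j$. The restriction of the marking to $N$ and $C_j$ together with their shared interface places is then a reachable marking of $\textit{compose}(\{N,C_j\})$. No other client shares the interface places of $N$ in this phase: the other clients only reach those places after receiving a response, and by invariant 4 there is none for them. By Theorem~\ref{thm:wt_portnets}, the final marking $\textit{fin}_N+\textit{fin}_{C_j}$ is reachable, and by Theorem~\ref{thm:proper_completion_portnets} all interface places are then empty. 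Firing the closure transition gives a marking in which the server is back in $i$, $C_j$ sits in $r_j$, and the number of clients not in $r_j$ has dropped by one.

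\textbf{Step 3: phase at $i$.} With the server in $i$ and no session active, every client is either in $i_j$, waiting with a pending request, or finished in $r_j$. If some client is not finished, make it send its request if it has not already done so. The server then consumes some request, moves to $q$ and sends a response; we are in the situation of Step 2. If all clients are in $r_j$, then by the invariants the request place is empty, so the marking equals $m_f$. Any reachable marking is brought into one of these phases by first completing a possibly partially executed server step. Induction on the number of unfinished clients then yields reachability of $m_f$.

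The main obstacle is Step 2. We must justify that the projected marking is reachable in the two-party composition, despite interleaving with the other clients' request and receive transitions. The plan is to use Lemma~\ref{shuffle_lemma}: the other clients' transitions touch only the request and response places and their own internal places, which are disjoint from the presets of the transitions of $N$ and $C_j$ after the session starts. We can therefore permute them out of the firing sequence, obtaining a run of $\textit{compose}(\{N,C_j\})$ from its initial marking. A further subtlety is the behaviour of the response place, where only the client that actually sent the request should be able to receive; we would rely on the pattern's construction, with individual response channels as depicted, to ensure this.
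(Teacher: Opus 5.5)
Your overall strategy (establish invariants, drive the server back to $i$, and induct on the number of clients not yet in $r_j$) is sound. It is a genuinely different route from the paper, which just invokes Thm.~3.3.10 of~\cite{bera2011component}. As written, however, your argument is about the wrong net. In the synchronization pattern nothing ``runs $N$''. The server is the closure of $i \xrightarrow{u} p \xrightarrow{v} q \xrightarrow{w} r$, where $u$ receives a request from $a_1$, $v$ sends the response into $a_2$, and $w$ receives from $a_3$. Each client is $i_j \xrightarrow{u_j} p_j \xrightarrow{v_j} q_j \xrightarrow{w_j} r_j$, so it waits for the response in $p_j$, not $q_j$. The places $q, q_1,\dots,q_n$ are only refinable. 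Your Step~2 projects a run onto $\textit{compose}(\{N,C_j\})$ with Lemma~\ref{shuffle_lemma} and then appeals to Theorems~\ref{thm:wt_portnets} and~\ref{thm:proper_completion_portnets}. That is really an argument about the refined net of Definition~\ref{def:partial_mirrored_portnet_pattern}, i.e.\ toward Theorem~\ref{thm:partial_mirrored_portnet_weakly_terminates}. The projection you yourself flag as the main obstacle is also left as a plan. For this lemma it is unnecessary, since the ``session'' is just $w_j$ followed by $w$.

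The second concrete gap is the response channel. There are no per-client channels: $a_1,a_2,a_3$ are single interface places shared by all clients, and requests are indistinguishable tokens. The token in $a_2$ can be consumed by any client waiting in some $p_k$, so ``the client that sent the request receives the response'' is not a property of the net. What you actually need is that a marked $a_2$ implies some client sits in some $p_k$. That is the invariant $\sum_j m(p_j) = m(a_1)+m(p)+m(a_2)$; your invariant~3 omits $m(a_2)$ and is violated right after $v$ fires.

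With the five invariants of Lemma~\ref{placeInvariants}, which hold in $m_0$ and are preserved, the proof closes by a case split on the server token:
\begin{itemize}
\item at $r$, fire $t$;
\item at $p$, fire $v$;
\item at $q$, invariant~5 gives exactly one of three situations: a client in some $q_j$ (fire $\langle w_j, w, t\rangle$), $a_3$ marked (fire $\langle w, t\rangle$), or $a_2$ marked, whence invariant~3 yields a client in some $p_k$ (fire $\langle v_k, w_k, w, t\rangle$);
\item at $i$, invariants~1, 3 and~5 force $m(a_2)=m(a_3)=0$, no client in any $q_j$, and $\sum_j m(p_j)=m(a_1)$. So either every client is in its $r_j$ and $m=m_f$, or you fire $u_k$ if necessary and then $u$, starting a round that moves one more client to its $r_j$.
\end{itemize}
Done this way, your route gives a self-contained proof. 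That is arguably more convincing than the bare citation, given that Fig.~\ref{fig:deadlockWithMultipleClients} shows multi-client compositions need not weakly terminate.
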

\begin{proof}
By the Thm. 3.3.10 \cite{bera2011component}, the composition of multiple client portnets with a server portnet is weakly terminating. The synchronization pattern is such a composition, therefore it is also weakly terminating. 
\qed
\end{proof}
The structure of a synchronization pattern leads to the following observations: First, since the skeletons of client and server are S-nets with a single token in \textit{init}, there can be at most one token in their internal places.  
Second, exactly one token produced by one of the clients can be consumed by the server. The server can then produce exactly one token back towards its clients. Third, at most one client consumes a token produced by the server and it becomes the chosen client. when a client is chosen the mirror places of that client and server are marked (first server then client) and all interface tokens produced by one were consumed by the other. From here the chosen client produces a token and marks its final place, enabling the server to pick this token and mark its final place. 
\begin{lemma}\label{placeInvariants} 
Let $m$ be a marking of a synchronization pattern with $n \in \mathbb{N}$ clients satisfying the following properties
\begin{enumerate}
    \item $m(i) + m(p) + m(q) + m(r) = 1$
    \item $\forall j \in \{1 \ldots n\}: m(i_j) + m(p_j) + m(q_j) + m(r_j) = 1$
    \item $\sum_{j \in \{1 \ldots n\}} m(p_j) = m(a_1) + m(p) +m(a_2)$,
    \item $ (\sum_{k \in \{1 \ldots n\}} m(q_k)) + m(a_2)\leq 1$, 
    \item $m(a_2) + m(a_3) \leq 1$ and $m(a_2) + m(a_3) = m(q) - \sum_{k \in \{1 \ldots n\}} m(q_k)$
\end{enumerate}
Then firing any enabled transition $t$ from $m$ does not violate these properties. 
\end{lemma}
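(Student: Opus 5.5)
The plan is a case analysis over the transitions of the synchronization pattern of Fig.~\ref{fig:synchPattern} (left), with the refinable places $q, q_1,\dots,q_n$ still treated as single, unrefined places. Reading the figure, the transitions I expect to handle are:
\begin{itemize}
\item for each client $C_j$: a send $i_j \to p_j$ that produces a request in $a_1$; a receive consuming $p_j$ and $a_2$ and producing $q_j$; and a send $q_j \to r_j$ that produces a token in $a_3$;
\item for the server: a receive consuming $i$ and $a_1$ and producing $p$; a send $p \to q$ that produces a token in $a_2$; a receive consuming $q$ and $a_3$ and producing $r$; and the closure transition $t$ from $r$ back to $i$.
\end{itemize}
If the figure differs in inessential details, the same scheme applies. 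Each property is linear (an equation) or a linear bound. So for each transition $t$ and marking $m \xrightarrow{t} m'$, I compute the change of every left- and right-hand side, $\Delta = \post{t} - \pre{t}$ restricted to the places involved, and check that equations are preserved and bounds are not exceeded.

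The equations (1), (2), (3) and the equality in (5) I expect to be place-invariant style and to hold uniformly.
\begin{itemize}
\item For (1) and (2), every server and client transition moves exactly one token within $\{i,p,q,r\}$, respectively within $\{i_j,p_j,q_j,r_j\}$, or touches none of these places.
\item For (3), the client request send adds one token to both $p_j$ and $a_1$. The server receive moves a token from $a_1$ to $p$, and the server send moves it from $p$ to $a_2$. The client receive removes one from both $p_j$ and $a_2$. Both sides therefore change by the same amount.
\item For the equality in (5), the server send $p\to q$ adds $1$ to both $q$ and $a_2$. The client receive moves $a_2$ to $q_k$, which leaves $m(q)-\sum_k m(q_k)$ and $m(a_2)+m(a_3)$ unchanged. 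The client send $q_k\to r_k$ turns $q_k$ into $a_3$, raising both sides by one. The server receive consumes $q$ and $a_3$, lowering both sides by one.
\end{itemize}
All remaining transitions leave these quantities untouched. This is routine bookkeeping.

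The real work lies in the two inequalities (4) and the bound in (5). Only transitions that increase the left-hand side need attention, and these are the server send $p \to q$ and, for (5), nothing else: the client send $q_k\to r_k$ increases $a_3$ but decreases $q_k$ and keeps the sum governed by the equality. For the server send, I will use enabledness: $m(p)\geq 1$, so by (1) $m(q)=0$. Substituting into the equality of (5) gives $m(a_2)+m(a_3) + \sum_k m(q_k) = 0$, and since all terms are non-negative, every term is $0$. After firing, $\sum_k m'(q_k)+m'(a_2) = 1$ and $m'(a_2)+m'(a_3)=1$, so both bounds hold. For the client receive into $q_k$, the expression in (4) is unchanged ($a_2$ decreases, $q_k$ increases). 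The client send $q_k\to r_k$ decreases it. I expect this step, deriving the vanishing of all the relevant places from the combination of (1) and (5), to be the main obstacle, since it is the only place where the hypotheses must be combined rather than checked transition by transition.

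Finally, I will note that the closure transition and the request-phase transitions do not touch $q$, $q_k$, $a_2$ or $a_3$, so they trivially preserve (4) and (5). That completes the case analysis and hence the lemma.
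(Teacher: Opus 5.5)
Your proposal is correct and follows the paper's proof: a transition-by-transition check of the five properties (you group by property, the paper by transition), where the only step combining hypotheses is the server send $p\to q$, for which (1) and the equality in (5) force $a_2$, $a_3$ and all $q_k$ to be empty, exactly as in the paper's case for $v$. Two small slips should be fixed: the client receive into $q_k$ lowers \emph{both} sides of the equality in (5) by one rather than leaving them unchanged, and the client send $q_k\to r_k$ does raise $m(a_2)+m(a_3)$, so you should state the one-line bound $m'(a_2)+m'(a_3)=m'(q)-\sum_k m'(q_k)\le m'(q)\le 1$ (this in fact shows that both inequalities are implied by (1) and the equality in (5), so they need no separate treatment).
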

\begin{proof}
Suppose $m$ is a marking satisfying the stated properties.

\begin{itemize}
    \item Suppose transition $u$ is enabled. Then $m(i) > 0 \wedge m(a_1) > 0$. The firing of transition $u$, consumes a token from $i$ and $a_1$ and adds a token to $p$. This changes the number of tokens in the places of invariants $1$ and $3$ as follows
    $$m(i) - 1 + m(p) + 1 + m(q) + m(r) = 1$$ and $$\sum_{j \in \{1, \ldots, n\}} m(p_j) = m (a_1) - 1 + m(p) + 1 + m(a_2)$$ 
    Hence firing $u$ leads to a marking where these invariants are again satisfied.\\ 
    
    \item Suppose transition $v$ is enabled. Then $m(p) > 0$. By the invariant $1$, we deduce that $m(q) = 0$. So in the invariant $5$, $m(q) - \sum_{k \in \{1, \ldots, n\}} m(q_k) = 0$, which means $m(a_2) + m(a_3) = 0$. 
    
    The firing of transition $v$, consumes a token from $p$ and adds a token to $q$ and $a_2$. This changes the number of tokens in the places of invariants $1$, $3$, $4$ and $5$ as follows 
    $$m(i) + m(p) - 1 + m(q) + 1 + m(r) = 1$$
    $$\sum_{j \in \{1, \ldots, n\}} m(p_j) = m(a_1) + m(p) - 1 + m(a_2) + 1$$
    $$ \sum_{k \in \{1, \ldots, n\}} m(q_k) + m(a_2) + 1 \leq 1$$
    $$m(a_2) + 1 + m(a_3) \leq 1$$
    $$m(a_2) + 1 + m(a_3) = m(q) + 1 - \sum_{k \in \{1 \ldots n\}} m(q_k)$$
    Hence firing $u$ leads to a marking where these invariants are again satisfied.\\ 
    
    \item Suppose transition $w$ is enabled. Then $m(q) > 0$ and $m(a_3) > 0$. By the invariant $5$, we can conclude that $m(a_2) = 0$. 
    The firing of transition $w$, consumes a token from $q$ and $a_3$ and adds a token to $r$. 
    This changes the number of tokens in the places of the invariants $1$ and $5$ as follows 
    $$m(i) + m(p) + m(q) - 1 + m(r) + 1 = 1$$
    $$m(a_2) + m(a_3) - 1 \leq 1$$
    $$m(a_2) + m(a_3) - 1 = m(q) - 1 - \sum_{k \in \{1, \ldots, n\}} m(q_k)$$
    Hence firing $w$ leads to a marking where these invariants are again satisfied.\\ 
    
    \item Suppose transition $t$ is enabled. Then $m(r) > 0$. The firing of $t$, consumes a token from $r$ and adds a token to $i$. This changes the number of tokens in the places of invariant $1$ as follows $$m(i) + 1 + m(p) + m(q) + m(r) - 1 = 1$$ 
    Hence firing $t$ leads to a marking where these invariants are again satisfied.\\
    
    \item Suppose transition $u_j$ is enabled for some $j \in \{1, \ldots, n\}$. Then $m(i_j) > 0$. The firing of $u_j$ consumes a token from $i_j$ and adds a token to $a_1$ and $p_j$. This changes the number of tokens in the places of invariant $2$ and $3$ as follows
    $$m(i_j) - 1 + m(p_j) + 1 + m(q_j) + m(r_j) = 1$$
    $$(\sum_{j \in \{1, \ldots, n\}} m(p_j)) + 1 = m(a_1) + 1 + m(p) +m(a_2)$$
    Hence firing $u_j$ leads to a marking where these invariants are again satisfied.\\

    \item Suppose transition $v_j$ is enabled for some $j \in \{1 \ldots n\}$. Then $m(a_1) > 0 \wedge m(p_j) > 0$. The firing of transition $v_j$ consumes a token from $p_j$ and $a_2$ and adds a token to $q_j$. This changes the number of places in the invariants of $2$, $3$, $4$ and $5$ as follows
    $$m(i_j) + m(p_j) -1 + m(q_j) + 1 + m(r_j) = 1$$
    $$(\sum_{j \in \{ 1, \ldots, n\}} m(p_j)) - 1 = m(a_1) + m(p) + m(a_2) - 1$$
    $$(\sum_{j \in \{1, \ldots, n\}} m(q_j)) + 1 + m(a_2) - 1 \leq 1$$
    $$m(a_2) - 1 + m(a_3) = m(q) - (\sum_{j \in \{1, \ldots, n\}} m(q_j) + 1)$$
    
    Hence firing $v_j$ leads to a marking where these invariants are again satisfied.\\
    
    \item Suppose transition $w_j$ is enabled for some $j \in \{1, \ldots, n\}$. Then $m(q_j) > 0$. 
    
    By the invariant $4$, we can deduce that $m(a_2) = 0$. Furthermore, by the invariant $1$ and $5$, we can deduce that $m(q) = 1$.     
    The firing of transition $w_j$ consumes a token from $q_j$ and adds a token to $r_j$ and $a_3$. This changes the number of tokens in the places of invariants $2$, $4$ and $5$ as follows
    $$m(i_j) + m(p_j) + m(q_j) - 1 + m(r_j) + 1 = 1$$
    $$(\sum_{k \in \{1, \ldots, n\}} m(q_k) - 1) + m(a_2) \leq 1$$
    $$m(a_2) + m(a_3) + 1 \leq 1$$
    $$m(a_2) + m(a_3) + 1 = m(q) - (\sum_{k \in \{1, \ldots, n\}} m(q_k) - 1)$$
    Hence firing $w_j$ leads to a marking where these invariants are again satisfied.
\end{itemize}
We thus conclude that firing any enabled transition from a marking that satisfies all 5~properties again results in a marking that satifies these properties. \qed
\end{proof}
\begin{corollary}\label{cor:sync_places_sync_pattern}
The net system of a synchronization pattern 
with $n \in \mathbb{N}$ clients, satisfies 
\begin{itemize}
    \item $\forall m \in \mathcal{R}(N,m_0) : m(q) = 1 \implies \sum_{j \in \{1 \ldots n\}} m(q_j) \leq 1$
    \item $\forall m \in \mathcal{R}(N,m_0) : \sum_{j \in \{1 \ldots n\}} m(q_j) = 1 \implies m(q) = 1$
\end{itemize}
\end{corollary}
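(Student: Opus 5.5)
The plan is to derive both items from Lemma~\ref{placeInvariants}. First we check that the initial marking $m_0 = \{i\} \cup \{i_j \mid j \leq n\}$ satisfies the five properties listed there.

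Property~1 holds because only $i$ among $i,p,q,r$ is marked. Property~2 holds because each $i_j$ carries exactly one token. In $m_0$ the places $p_j, a_1, p, a_2, q_k, a_3, q$ are all empty. So properties~3 and~4 reduce to $0=0$ and $0 \leq 1$. Property~5 reduces to $0 \leq 1$ together with $0 = 0 - 0$.

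Lemma~\ref{placeInvariants} says that firing any enabled transition preserves the five properties. A straightforward induction on the length of a firing sequence $m_0 \xrightarrow{\sigma} m$ then shows that every $m \in \mathcal{R}(N,m_0)$ satisfies all five properties. This treats the lemma as a black box covering all transitions of the pattern, including the closure transition $t$.

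For the first item, take a reachable $m$ with $m(q)=1$. Property~4 gives $\sum_k m(q_k) + m(a_2) \leq 1$, and since $m(a_2) \geq 0$ we directly get $\sum_k m(q_k) \leq 1$. We do not even need the hypothesis $m(q)=1$ here, so the first item follows from the invariant alone.

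For the second item, assume $\sum_k m(q_k) = 1$. Property~5 gives $m(q) = m(a_2) + m(a_3) + \sum_k m(q_k) \geq 1$. Property~1, with nonnegativity of the other summands, gives $m(q) \leq 1$. Hence $m(q)=1$.

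The only step needing care is the induction establishing that the invariants hold in all reachable markings. Here we must make sure Lemma~\ref{placeInvariants} really covers every transition of the composed net. If some transitions inside the clients or server are not listed, we argue that they do not touch the places $i,p,q,r,i_j,p_j,q_j,r_j,a_1,a_2,a_3$. Otherwise the arithmetic is immediate.
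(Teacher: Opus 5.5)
Your proposal is correct and is the paper's own argument. The paper's proof simply notes that $m_0$ satisfies the five properties of Lemma~\ref{placeInvariants}, hence so does every reachable marking, and the two items then follow from property~4 and from properties~5 and~1, exactly as you spell out. Your closing hedge about unlisted transitions is unnecessary, since the synchronization pattern has exactly the transitions $u,v,w,t$ and $u_j,v_j,w_j$, all of which the lemma treats.
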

\begin{proof}
Follows from Lemma~\ref{placeInvariants} and the fact that marking $m_0$ satisfies the place invariants (Def.~\ref{def:synchronization_pattern}).
\qed
\end{proof}
Refinements of places and transitions \cite{murata1989petri} are a well-known strategy to incrementally elaborate the behavior of a Petri net while preserving soundness. We extend the classical place refinement operation to labeled OPNs.  
\begin{definition}[Place Refinement of Labeled OPN]\label{def:place_refinement}
Given two disjoint labeled OPNs $N$ and $M$. 
The refinement of a place $p \in P_N$ by $M$, denoted by $N \odot_p M$, is a labeled OPN $(P, I, O, T, F, \textit{init}, \textit{fin}, \mu)$ with
$P = (P_N \setminus \{p\}) \cup P_M$, 
$I = I_N \cup I_M$,
$O = O_N \cup O_M$,
$T = T_N \cup T_M$,  
$F = (F_N \setminus ((\pre{p} \times \{p\}) \cup (\{p\} \times \post{p}))) 
\cup F_M \cup (\pre{p} \times \textit{init}_M) \cup (\textit{fin}_M \times \post{p})$,
$\textit{init} = \textit{init}_N$, $\textit{fin} = \textit{fin}_N$, 
$\mu = \mu_N \cup \mu_M$.
\end{definition}
We formally introduce the notion of a \emph{partial mirrored portnet pattern}. Such a pattern captures the essence of a server communicating with multiple clients, obtained by (partial) mirroring, embedded inside a synchronization pattern by means of place refinements. An example is shown in Fig.~\ref{fig:synchPattern}.
\begin{definition}[\textbf{Partial Mirrored Portnet Pattern}]\label{def:partial_mirrored_portnet_pattern}
Let $N$ be a weakly terminating well-formed labeled portnet and let $(M_1,\phi_1), \ldots, (M_n,\phi_n)$, for $n \in \mathbb{N}$, be a set of disjoint partial mirrors of $N$.
Let $R = (\mathcal{C}, S, \{q, q_1, \ldots q_n\})$ be a synchronization pattern with $n$ clients such that $M_1, \ldots, M_n, N, \mathcal{C}, S$ have disjoint skeletons. 
A \emph{partial mirrored portnet pattern} is a composition $L = \hbox{compose}(\mathcal{C}' \cup \{\mathcal{S}'\})$, where 
$\mathcal{C}' = \{ C_1 \odot_{q_1} M_1, \ldots, C_n \odot_{q_n} M_n \}$ and $\mathcal{S}' = S \odot_{q} N$
with initial marking $m_0 = i_S + i_{C_1} + \ldots + i_{C_n}$, final marking $m_f = i_S + f_{C_1} + \ldots + f_{C_n}$ and a net system $(L,m_0,m_f)$. If $M_1, \ldots, M_n$ are full mirrors of $N$, then we call $L$ a \emph{mirrored portnet pattern}. 
\end{definition}

\begin{theorem}\label{thm:partial_mirrored_portnet_weakly_terminates}
The partial mirrored portnet pattern weakly terminates.
\end{theorem}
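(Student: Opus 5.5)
The plan is to reduce the statement to Lemma~\ref{thm:weakly_terminating_sync_pattern} (weak termination of the bare synchronization pattern) and Theorem~\ref{thm:wt_portnets} (weak termination of a single server with one partial mirror). The reduction uses a projection argument that collapses each refined net back to its refinable place. Concretely, I would define an abstraction map $\alpha$ from markings of $L$ to markings of the synchronization pattern $R$. It sets $\alpha(m)(q)=1$ iff some place of $N$ is marked, $\alpha(m)(q_j)=1$ iff some place of $M_j$ is marked, and leaves all other places of $R$ unchanged. It discards the interface places of the $N/M_j$ compositions. Every transition of $L$ either belongs to $R$, and then maps to the same firing in $R$, or belongs to $N$ or some $M_j$, and then maps to the empty step. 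Using the S-net structure, a token enters $\textit{init}_N$ exactly when $v$ marks $q$ and leaves $\textit{fin}_N$ exactly when $w$ fires. Hence $\alpha$ maps reachable markings of $L$ to reachable markings of $R$.

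The key invariant comes from Corollary~\ref{cor:sync_places_sync_pattern}, lifted through $\alpha$. In any reachable marking of $L$, at most one client refinement $M_j$ holds a token while the server refinement $N$ holds a token. A client $M_j$ can be active only if $N$ is. I then need a second invariant: the token flow inside the refined places is that of a single run of $\textit{compose}(\{N,M_j\})$ started from $i_N+i_{M_j}$. Interface places of $N$ shared with other clients $M_k$, $k\neq j$, must be empty. The reason is that $M_k$ can only have produced such tokens after the server chose it, and those runs were completed before the closure transition $t$ fired. To make this precise I would prove, by induction on firing sequences, that whenever $w$ fires the interface places of $N$ and the finished client are empty. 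This follows from Theorem~\ref{thm:proper_completion_portnets} applied to that pair, since $w_j$ needs $\textit{fin}_{M_j}$ and $w$ needs $\textit{fin}_N$, and both final places are marked only after proper completion. I expect this step to be the main obstacle. The difficulty is arguing that the client token in $\textit{fin}_{M_j}$ and the server token in $\textit{fin}_N$ belong to the same composition even though $w$ and $w_j$ are separate transitions. One must also rule out the client passing $w_j$ while interface tokens are still pending; the invariants of Lemma~\ref{placeInvariants} (items 4 and 5) pin down the chosen client and are the tool for this.

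Given these invariants, weak termination follows in two phases from an arbitrary reachable marking $m$ of $L$. In the first phase, if $N$ is active with chosen client $M_j$, the restriction of $m$ to $N$, $M_j$ and their interface places is reachable in $(\textit{compose}(\{N,M_j\}),i,f)$. Theorem~\ref{thm:wt_portnets} gives a sequence reaching $\textit{fin}_N+\textit{fin}_{M_j}$, and this sequence is firable in $L$ because the other places are untouched. If $N$ is active but no client has been chosen yet, I fire the pattern transitions that select one first, or handle a client that has not yet entered. In the second phase, the remaining behaviour is that of $R$ with empty refinements. Lemma~\ref{thm:weakly_terminating_sync_pattern} provides a path in $R$ to $m_f$, and I lift it step by step. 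Each time the path fires $v_j$, I insert the run given by Proposition~\ref{lemma:homing_sequence}, started from the synchronized initial places, which carries $N$ and $M_j$ to their final places before $w_j$ and $w$. This yields the final marking $i_S+f_{C_1}+\dots+f_{C_n}$ of $L$.
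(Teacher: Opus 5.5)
Your overall reduction is the paper's: collapse the refined places to $q$ and $q_j$, use Theorem~\ref{thm:wt_portnets} inside them, Theorem~\ref{thm:proper_completion_portnets} at the exit, and Lemma~\ref{thm:weakly_terminating_sync_pattern} outside. Your explicit invariant (interface places of $N$ empty whenever $w$ fires) is something the paper's much terser proof never addresses. The step that fails is your sub-goal of ruling out that the client passes $w_j$ while interface tokens are pending, because this does happen. Take $N$ with a single receive transition from $\textit{init}_N$ to $\textit{fin}_N$ and $M_j$ its mirror. After $v$ and $v_j$, $M_j$ sends, reaches $\textit{fin}_{M_j}$ and fires $w_j$, while the message still sits in the interface place and $N$ is still in $\textit{init}_N$. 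Invariants 4 and 5 of Lemma~\ref{placeInvariants} only constrain $a_2$, $a_3$, $q$ and the $q_k$, so they cannot be the tool here. Two consequences follow. First, $\textit{fin}_N$ and $\textit{fin}_{M_j}$ need never be marked simultaneously in $L$, so Theorem~\ref{thm:proper_completion_portnets} cannot be applied to an actual marking of $L$. Second, your phase-one case split misses the reachable situation where $N$ is still active but the chosen client has already left ($a_3$ marked, no $M_k$ marked). Under $\alpha$ this looks like ``no client chosen yet'', yet no selection transition is enabled, since $m(a_2)+m(a_3)\leq 1$ forces $a_2$ to be empty.

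The missing idea is that $M_j$ is frozen once its token reaches $\textit{fin}_{M_j}$. This holds because $\post{\textit{fin}_{M_j}}=\emptyset$ and every transition of $M_j$ has an input place in the skeleton. So project the run of $L$ since the last firing of $v$ onto $N$, $M_j$ and their interface places, keeping a virtual token in $\textit{fin}_{M_j}$ after $w_j$ has fired. The lifted invariants of Lemma~\ref{placeInvariants} guarantee that no other $M_k$ moves during the round. Your inductive hypothesis guarantees the interface places were empty when the round began. Together these make the projection a run of $\textit{compose}(\{N,M_j\})$ from $i_N+i_{M_j}$. In the missed case, Theorem~\ref{thm:wt_portnets} applied to the projected marking gives a continuation to $f_N+f_{M_j}$ that uses only $N$-transitions, hence is firable in $L$. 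Applying Theorem~\ref{thm:proper_completion_portnets} to the projected marking at the moment $N$ reaches $\textit{fin}_N$ gives your invariant at $w$, so no token addressed to any client is ever left behind. Finally, like the paper, you invoke Theorem~\ref{thm:wt_portnets} for $(N,M_j)$, which requires $M_j$ to be well-formed. Definition~\ref{def:partial_mirrored_portnet_pattern} does not impose this, and the paper itself notes that partial mirroring does not preserve well-formedness, so this hypothesis must be added explicitly.
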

\begin{proof}
By Lemma.~\ref{thm:weakly_terminating_sync_pattern} we know that a synchronization pattern consisting of $n \in \mathbb{N}$ clients weakly terminates. 
By Cor.~\ref{cor:sync_places_sync_pattern}. we know that first a server needs to mark a place $q$ and only then exactly one client $C_j$, for some $j \in \{1 \ldots n\}$, can mark its place $q_j$. Furthermore, the server cannot remove a token from $q$ until $q_j$ is marked. In the skeleton of the server, the place $q$ is always on a path from the initial place $i$ to final place $r$. So all firing sequences from the initial marking to the final marking of a synchronized portnet, first marks place $q$ with a token, followed by place $q_j$, while $q$ remains marked.

By the Def.~\ref{def:place_refinement} and Def.~\ref{def:partial_mirrored_portnet_pattern}, 
only the initial and final place of 
a well-formed labeled portnet $N$ and its partial mirrors $(M_1,\phi_1),\ldots,(M_n,\phi_n)$
have nodes (transitions) of the synchronization pattern in their preset and postset, respectively. 
So a marking of a synchronization pattern $R$ containing one token in $q$ and $q_j$, corresponds to a marking in the partial mirrored portnet pattern $L$ with a token in initial places of $N$ and $M_j$, respectively. 
By Thm.~\ref{thm:wt_portnets}, we know that from all reachable markings from such a marking, there exist a firing sequence to the final marking, i.e., final places of $N$ and $M_j$. Furthermore, by Thm.~\ref{thm:proper_completion_portnets}, we know that when the final places of $N$ and $M_j$ are marked, then all other places of $N$ and $M_j$ are empty.
By the Def.~\ref{def:place_refinement} and Def.~\ref{def:partial_mirrored_portnet_pattern},
a marking of $L$ containing a token in the final places of $N$ and $M_j$ (and no other tokens in places of $S$), enables exactly the same set of transitions as that of the net $R$ with a token in the places $q$ and $q_j$. 
Hence, the refinement with a multi-client portnet does not violate the weak termination property of the synchronization pattern. 
\qed
\end{proof}

\paragraph{Relation to Component Framework.} 
The component-based architectural framework introduced in \cite{BeraThesis} is based on the two concepts: components and interaction patterns. 
A \emph{component} is a strongly connected OPN with a special place called idle, which means that components are cyclic in their execution and therefore handle multiple clients. 
An \emph{interaction pattern} is a class of parameterized multi-workflow nets defined as a composition of a set of identical clients (OWN) and a set of identical servers (components). Four recurring weakly terminating interaction patterns are defined, namely remote procedure call (RPC), publish-subscribe (PS), goal-feedback-result (GFR) and mirrored port (MP). 
To represent a system design as a composition of components over interaction patterns, a graph based architectural diagram notation is presented. 
To derive a Petri net from an architectural diagram, a construction method is prescribed and shown to guarantee the weak termination property. 
The construction method is a refinement strategy based on \emph{simultaneous refinement} and \emph{insertion} operations to incrementally introduce interaction patterns into a system (OPN). 

To add a new pattern to this framework, it suffices to show that it is a weakly terminating interaction pattern, since~\cite[Lemma~3.4.1]{BeraThesis} guarantees that weak termination is preserved under simultaneous refinement. 
By Thm.~\ref{thm:partial_mirrored_portnet_weakly_terminates} we know that a mirrored portnet pattern weakly terminates. A synchronization pattern is an instance of a MP pattern, which is an interaction pattern. 
A mirrored portnet pattern is obtained by place refinements of refinable places of a synchronization pattern with S-OWNs, corresponding a well-formed server and its mirrored clients, followed by their composition, which is once again an interaction pattern. 

\begin{proposition}[\textbf{Refinement of Synchronization Pattern}]\label{lemma:sync_pattern_is_interaction_pattern}
A mirrored portnet pattern is an interaction pattern.
\end{proposition}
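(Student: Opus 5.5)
We prove this by reduction to the framework of \cite{BeraThesis}, where an interaction pattern is a parameterized multi-workflow net: a composition of a set of identical clients (OWNs) and a set of identical servers (components, i.e.\ strongly connected OPNs with an idle place), which is weakly terminating. For a mirrored portnet pattern $L = \textit{compose}(\mathcal{C}' \cup \{\mathcal{S}'\})$ we check three things in turn: the structural shape of each constituent, the identity of the clients, and weak termination.

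\textbf{Structure of the constituents.} The server side $\mathcal{S}' = S \odot_q N$ is the closure $S$ of the synchronization-pattern server with place $q$ refined by the S-OWN $N$. By Def.~\ref{def:place_refinement}, the arcs entering $q$ are redirected to $\textit{init}_N$ and the arcs leaving $q$ to $\textit{fin}_N$. Since $N$ is a workflow net, every node of $N$ lies on a path from $\textit{init}_N$ to $\textit{fin}_N$. Strong connectedness of the closure $S$ therefore carries over to $\mathcal{S}'$, so $\mathcal{S}'$ is a strongly connected OPN whose idle place is $i$. Each client $C_j \odot_{q_j} M_j$ is obtained by refining a workflow place of the OWN $C_j$ with the S-OWN $M_j$. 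Here $M_j$ is a full mirror: $\phi_j$ is a bijection and $M_j \cong$ the mirror of $N$. By the same path argument the result is again an OWN, with the same initial and final places as $C_j$.

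\textbf{Identical clients.} The clients $C_j$ of the synchronization pattern are pairwise isomorphic, since they are instances of the same pattern client. In a mirrored portnet pattern the $M_j$ are all isomorphic to one another, because each is a bijective mirror of $N$. Refinement at corresponding places $q_j$ preserves isomorphism, so all refined clients are isomorphic up to renaming of their interface places. This is exactly the parameterization required of an interaction pattern.

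\textbf{Weak termination.} This is Thm.~\ref{thm:partial_mirrored_portnet_weakly_terminates}, with initial marking $i_S + \sum_j i_{C_j}$ and final marking $i_S + \sum_j f_{C_j}$. These match the multi-workflow convention of \cite{BeraThesis}: the server rests in its idle place and every client reaches its final place.

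The main obstacle is the mismatch in the notion of a server. In \cite{BeraThesis} the server of an interaction pattern is a component with a single idle place, and clients communicate with it only through interface places. Here $\mathcal{S}'$ is obtained by refinement, so we must check that it introduces no additional initially marked places and that its interface places are partitioned consistently with Def.~\ref{def:composability}. I would first show that the synchronization pattern itself is an MP-pattern instance, as remarked above. I would then argue that refining $q$ and $q_j$ simultaneously by $N$ and $M_j$ is an instance of the simultaneous refinement operation. If that holds, \cite[Lemma~3.4.1]{BeraThesis} gives closure of interaction patterns under such refinement directly, and the structural checks above remain only as side conditions.
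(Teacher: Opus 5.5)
Your first three paragraphs follow essentially the paper's argument. The paper likewise notes that the synchronization pattern is an interaction pattern of \cite{BeraThesis} (with empty inhibitor and reset arcs). It then observes that refining $q$ by $N$ and each $q_j$ by $M_j$ again yields a server that is the closure of an S-OWN and clients that are S-OWNs, and concludes that their composition is an interaction pattern. Your strong-connectedness argument and your check that the clients are identical are welcome additions. For the latter, note that Def.~\ref{def:partial_mirror} only transfers arcs from $M_j$ to $N$. The converse, which you need for $M_j$ to be isomorphic to the mirror of $N$, holds because in an S-WFN skeleton every transition has exactly one pre-place, one post-place and one interface place. Weak termination via Thm.~\ref{thm:partial_mirrored_portnet_weakly_terminates} is not part of what the paper's proof establishes here, but it does no harm.

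The weak spot is your final paragraph. The paper's proof discharges the interface condition you leave open as the ``main obstacle'' directly, and the check is immediate:
\begin{itemize}
\item By Def.~\ref{def:place_refinement}, $\textit{init}_{\mathcal{S}'} = \{i\}$, $I_{\mathcal{S}'} = I_S \cup I_N$ and $O_{\mathcal{S}'} = O_S \cup O_N$, and similarly for each refined client.
\item Since $\phi_j$ is a bijection with $\phi_j(I_{M_j}) \subseteq O_N$ and $\phi_j(O_{M_j}) \subseteq I_N$, these inclusions are equalities. So the client's inputs are exactly the server's outputs and vice versa.
\item The arc-reversal clause of Def.~\ref{def:partial_mirror} gives client and server transitions opposite directions on each shared place.
\item Together with the fixed places $a_1, a_2, a_3$, this gives composability per Def.~\ref{def:composability}.
\end{itemize}

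The route you propose instead would not work, for two reasons. First, \cite[Lemma~3.4.1]{BeraThesis}, as the paper invokes it, says that \emph{weak termination} is preserved under simultaneous refinement. It does not say that the class of interaction patterns is closed under refinement. Second, $N, M_1, \ldots, M_n$ do not themselves form an interaction pattern. $N$ is an S-OWN rather than a strongly connected component, and without the synchronization wrapper the multi-client composition can deadlock (cf.\ Fig.~\ref{fig:deadlockWithMultipleClients}). Replace that paragraph with the direct interface check above and your proof is complete.
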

\begin{proof}
By the Def.~\ref{def:synchronization_pattern}, a synchronization pattern is an interaction pattern \cite{BeraThesis} with empty inhibitor and reset arcs. 
By the Def.\ref{def:composability}, a composition of a well-formed server portnet and its full mirrored clients satisfies (i) for any client and server pairs, input places of a client is equal to the output place of a server and vice-versa, (ii) transitions of client and server connected to the same interface place have opposite directions. 
By the Def.~\ref{def:partial_mirrored_portnet_pattern}, the successive place refinements (Def.~\ref{def:place_refinement}) of refinable places of a synchronization pattern with a well-formed server portnet and its full mirrored clients results once again in server that is a closure of a S-OWN and a set clients that are S-OWNs. Their composition is an interaction pattern \cite{BeraThesis} with empty inhibitor and reset arcs. 
\qed
\end{proof}

\section{Application}
The growing recognition that a lack of explicit and precise specifications of component interfaces potentially leads to problems during system integration have led to the development of an open-source interface modeling tool ComMA \cite{kurtev2024model}. The goal is to capture both static and dynamic aspects of an interface as a sound basis for a formal contract between service providers and consumers. 
ComMA specifications are used to validate whether implementations conform to their specifications using run-time monitoring \cite{kurtev2017runtime} and more recently, model-based testing techniques. 
The tool is developed in collaboration with industry and is used by software engineers at Philips and Thales as part of their development processes. To fully realize the benefits and detect design errors earlier, the need for correct interface specifications was recognized. 
The results of this paper have been implemented in ComMA as algorithms to check well-formedness of interface specifications.
The feature has proven to be successful in raising awareness about the pitfalls of designing asynchronous systems and serve as a guide for engineers to create weakly terminating interface protocols by construction. 
\begin{figure}[t] 
    \centering
    \includegraphics[scale=0.82]{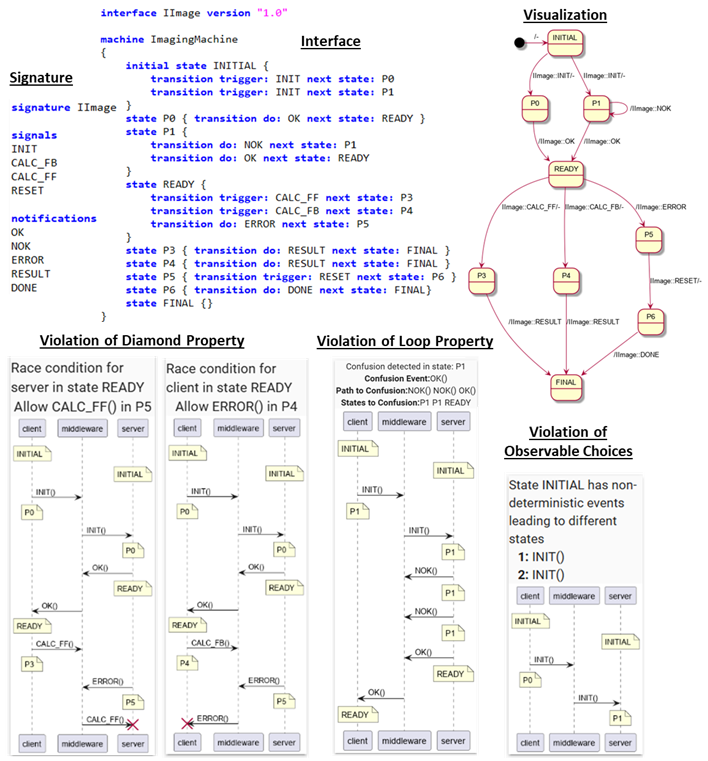}
    \caption{Application of the results to ComMA}
    \label{fig:ComMA}
\end{figure}
ComMA provides a family of domain specific languages that includes types, signature, interface and input parameters. \emph{Types} capture data definitions built of basic types (int, real, string, etc.), enumerations, vectors, records and maps. 
\emph{Signature} captures event definitions (i.e., their labels and data parameter types) of type: command-reply (client sends a command and is blocked until server reply), signal (client sends a non-blocking event) and notification (server sends a non-blocking event). 
\emph{Interface} captures allowed order of input and output events as transitions of a protocol state machine, defined from the perspective of a service provider (server), also known as a \emph{provided interface}. 
The state machine may additionally define global variables and manipulate them using an expressions language. Output events containing data arguments refer to global variables, while data arguments of input events are provided by clients. 
When defining usage of a provided interface by another component, it is common practice to simply mirror the protocol of the server. So an explicit model of a client protocol is omitted. 
For standalone simulation of the server state machine, \emph{input parameters} specify possible data values of input event arguments (in each possible state). 
Analysis and simulation of server state machine is made possible by transformation to colored Petri nets using SNAKES \cite{pommereau2015snakes}. A formal treatment of this transformation is out of scope. 
The remainder of this section will omit data aspects and informally discuss the mapping to labeled portnets.

A ComMA specification of the server portnet $S$ in Fig.~\ref{fig:deadlocked} is shown in Fig.~\ref{fig:ComMA}. The signature defines signals and notifications with names corresponding transition labels of $S$. The syntax of an interface state machine comprises of 
state labels (with an explicit initial state) and transitions are directed edges between states. A generated visualization of the state machine in UML syntax is shown in Fig.~\ref{fig:ComMA}. 
Each state maps to a place in the underlying portnet with a name corresponding its label. 
Each state defines a set of possible transitions and each transition defines a target \emph{next state}. 
Two kinds of transitions are distinguished. 
A \emph{non-triggered transition} defines an initiative of a server in a 
\emph{do} section containing a 
notification which maps to a \emph{send} transition with a preset corresponding the state (place) it is defined within and a postset corresponding the state (place) referred to by \emph{next state}. 
A \emph{triggered transition} defines an initiative of a client in a 
\emph{trigger} section containing a signal which maps to a \emph{receive} transition with a preset corresponding the state (place) it is defined within and a postset corresponding the state (place) referred to by \emph{next state}. 
With these mapping rules, a labeled portnet is derived from a state machine syntax.  
The underlying translation discards interface places since we are only interested in analyzing the skeleton. 
For compact specifications, many shorthand notations are available, e.g., triggered transitions with multiple \emph{do} sections, each having a sequence of notifications, \emph{for all states} prevents duplication of same transition in multiple states, etc. 
All constructs can be expressed in terms of concepts presented so far. 

To support a modeler in creating weakly terminating specifications, algorithms are implemented in ComMA to first validate whether the skeleton is weakly terminating and then to validate whether the observable choices, diamond and loop properties hold. 
Whenever a violation is detected, a UML sequence diagram is generated (see Fig.~\ref{fig:ComMA}) illustrating the problem in a composition of a server with an assumed client that fully mirrors the protocol. 

\section{Conclusion}

This paper presented sufficient conditions to guarantee weak termination in a composition of a server and multiple \emph{partial} mirrored client protocols. 
These conditions impose structural restrictions characterized by well-formedness, relaxing the strict restrictions imposed by existing results, thereby encompassing a larger class of communication protocols encountered in practice.
By embedding well-formedness checks on the server protocol and assuming a symmetric mirrored client in the ComMA tool, modelers are made aware about potential hidden issues in component interactions, early on during design-time. 
As future work, we aim to generalize the synchronization pattern, introduced to preserve weak termination in a composition of multiple clients. 
We also aim to extend our results to the component modeling language of ComMA, where functional constraints define restrictions over events of provided and required interfaces. 

\section*{Acknowledgements}
The research is carried out as part of the \emph{TechFlex 2026} and \emph{Matala 2025} program under the responsibility of TNO-ESI in cooperation with ASML Holding N.V. and Thales nederland B.V. The research activities are co-funded by Holland High Tech | TKI HTSM via the PPP Innovation Scheme (PPP-I) for public-private partnerships. 

\bibliographystyle{splncs04}
\bibliography{libary}

\end{document}